\newtheorem{theorem}{Theorem}
\DeclareMathOperator{\E}{\mathbb{E}}
\begin{document}

\title{Cramer-Rao Bounds for Joint RSS/DoA-Based Primary-User Localization in Cognitive Radio Networks}
\author{Jun~Wang, Jianshu~Chen, and Danijela~Cabric%
\thanks{The authors are with the Department of Electrical Engineering, University of California, Los Angeles, CA, 90095, USA (email: \{eejwang, jshchen, danijela\}@ee.ucla.edu).}
}

\maketitle


\begin{abstract}
Knowledge about the location of licensed primary-users (PU) could enable several key features in cognitive radio (CR) networks including improved spatio-temporal sensing, intelligent location-aware routing, as well as aiding spectrum policy enforcement. In this paper we consider the achievable accuracy of PU localization algorithms that jointly utilize received-signal-strength (RSS) and direction-of-arrival (DoA) measurements by evaluating the Cramer-Rao Bound (CRB). Previous works evaluate the CRB for RSS-only and DoA-only localization algorithms separately and assume DoA estimation error variance is a fixed constant or rather independent of RSS. We derive the CRB for joint RSS/DoA-based PU localization algorithms based on the mathematical model of DoA estimation error variance as a function of RSS, for a given CR placement. The bound is compared with practical localization algorithms and the impact of several key parameters, such as number of nodes, number of antennas and samples, channel shadowing variance and correlation distance, on the achievable accuracy are thoroughly analyzed and discussed. We also derive the closed-form asymptotic CRB for uniform random CR placement, and perform theoretical and numerical studies on the required number of CRs such that the asymptotic CRB tightly approximates the numerical integration of the CRB for a given placement.
\end{abstract}

\IEEEpeerreviewmaketitle


\section{Introduction}
\label{sec:Intro}

Cognitive Radio (CR) is a promising approach to efficiently utilize the scarce RF spectrum resources \cite{Mitola1999}. In this paradigm, knowledge about spectrum occupancy in time, frequency, and space that is both accurate and timely is crucial in allowing CR networks to opportunistically use the spectrum and avoid interference to a primary user (PU) \cite{Haykin2005}. In particular, information about PU location could enable several key capabilities in CR networks including improved spatio-temporal sensing, intelligent location-aware routing, as well as aiding spectrum policy enforcement \cite{Celebi2007a}.

The PU localization problem in CR networks is in general different from localization in other applications such as Wireless Sensor Networks (WSN) \cite{Patwari2005} and Global Positioning System (GPS) \cite{Kaplan2005}, due to the following two features. First, a PU does not communicate with CRs and very limited knowledge about its signaling, such as transmit power or modulation scheme, is available. Therefore, passive localization techniques should be applied. Second, since CRs need to detect and localize PUs in the whole coverage area at a very low signal-to-noise ratio (SNR), in order to avoid interference to the primary network, the required number of CRs is relatively large and cooperation among CRs is necessary.

\subsection{Related Work}
Prior research on passive localization can be categorized into three classes based on the types of measurements shared among sensors to obtain location estimates \cite{Patwari2005}. Received-signal-strength (RSS) based algorithms use measured received power from the PU to provide coarse-grained estimates at a low hardware and computational cost. Time-difference-of-arrival (TDoA) based algorithms obtain location estimates from time differences among multiple receptions of the transmitted signal. They are not suitable for CR applications since TDoA-based algorithms require perfect synchronization among CRs. Direction-of-arrival (DoA) based algorithms use target DoA estimates observed at different receivers to obtain location estimates. The DoA estimates can be obtained from either multiple antenna arrays, directional antennas or virtual arrays formed by cooperative CRs. Therefore, RSS and DoA-based algorithms are the proper choices for the PU localization problem.

Passive localization based on RSS and/or DoA measurements has been well-studied in both WSN and CR literature, in terms of algorithm design and evaluation. RSS-based algorithms include range-based and range-free techniques, a detailed survey can be found in \cite{Wang2011}. Popular DoA fusion algorithms include Maximum Likelihood (ML) \cite{Gavish1992}, Stansfield algorithm \cite{Stansfield1947} and Linear Least Square (LLS), each of which provides different tradeoff between accuracy and complexity. Algorithms that jointly utilize RSS and DoA information are also proposed. For example, weighted versions of the popular DoA fusion algorithms are shown to improve the localization accuracy \cite{Gavish1992}. The weights are typically estimated error variances of individual DoA measurements, that are obtained from instant RSS \cite{Stoica1989}.

In this paper, we focus on characterizing the achievable performance of joint RSS/DoA-based localization algorithms by means of the Cramer-Rao Bound (CRB), which provides a lower bound on the estimation accuracy of any unbiased estimator using either or both RSS and DoA. The CRB for RSS-only localization is studied in \cite{Patwari2005} and \cite{Patwari2008} assuming independent and correlated shadowing channel, respectively. The DoA-only CRB is discussed in several papers \cite{Patwari2005, Gustafsson2005, Ash2008, Seow2008, Vaghefi2010}, however they all assume the DoA estimates from different CRs are subject to i.i.d. Gaussian errors with zero mean and fixed variance. Since the DoA estimation error variance depends on instant RSS and other system parameters (e.g. number of antennas and samples, array orientation error) \cite{Stoica1989}, assuming it's constant makes the derived bounds less accurate. Recently, the mathematical model of DoA estimation error is used in \cite{Fu2009, Penna2011, Qi2006} to calculate DoA estimation error variance for some specific system settings, such as node density and placement, using fixed value of RSS for all CRs. The variance values are then plugged into the existing DoA-only CRB to evaluate the localization performance. This approach does not include the system parameters in the mathematical model of DoA error variance and channel parameters in RSS, thus the CRB is less comprehensive.

\subsection{Contributions}
Since the DoA estimation accuracy strongly depends on RSS, and practical algorithms show that using both RSS and DoA estimates provides better localization accuracy than using DoA alone, the CRB for joint RSS/DoA-based localization is more capable to characterize the positioning ability of the CR network. Along this direction, this paper has the following contributions:

\begin{enumerate}

\item Derivation of the CRB for joint RSS/DoA-based passive localization considering the interdependency between RSS and DoA, for a given CR placement. The CRB is derived for both optimal DoA estimation algorithm, with error given by the CRB of DoA estimation error variance, and MUSIC algorithm. Therefore, the derived CRB provides both ultimate achievable accuracy and the accuracy achieved by a practical DoA estimation algorithm.

\item Derivation of closed-form asymptotic joint RSS/DoA-based CRB for uniform CR placement with i.i.d shadowing. The asymptotic RSS-only CRB for such case is also derived as a byproduct. We also provide theoretical and numerical studies on the required number of CRs that ensures the asymptotic CRB tightly approaches the numerical integration of the CRB for a given placement.

\item Study of the impact of various system, channel and array parameters on the localization accuracy, theoretically and via simulation. The considered parameters include number of CRs, shadowing variance, correlation distance, number of antennas, number of samples and array orientation error. The study provides insights and guidelines for practical system design of PU localization in CR network.

\item Comparison of the derived CRBs with practical RSS-only and joint RSS/DoA localization algorithms. We study the robustness of practical algorithms to major system parameters, and reveal the potential improvements of practical localization system and algorithm design.

\end{enumerate}

The rest of the paper is organized as follows. The system model is introduced in Section \ref{sec:Model}. The joint CRB derivation for a fixed CR placement is presented in Section \ref{sec:JCRB_Fixed}. The asymptotic joint CRB derivation for uniform CR placement is shown in Section \ref{sec:JCRB_Asym}. Numerical results evaluating impacts of various parameters on the joint CRB are discussed in Section \ref{sec:Simulation}. Finally, the paper is concluded in Section \ref{sec:Conclusion}.


\section{System Model}
\label{sec:Model}

Assume $N$ CRs cooperate to localize a single PU. 2-dimensional locations of the PU and the $n^{\text{th}}$ CR are denoted as $\boldsymbol{\ell}_{P}=[x_{P}, y_{P}]^T$ and $\boldsymbol{\ell}_{n}=[x_{n}, y_{n}]^T$, respectively. All locations are static in the observation period, and CR locations are known. Available measurements at CRs are RSS and DoA. RSS measurement at the $n^{\text{th}}$ CR is modeled as $\widehat{\psi}_n \triangleq P_T \frac{c_0 10^{-s_n/10}}{d_n^{\gamma}} \text{Watt}$, where $P_T$ is the PU transmit power, $c_0$ is the (constant) average multiplicative gain at reference distance, $d_n = \| \boldsymbol{\ell}_{n} - \boldsymbol{\ell}_{P} \|$ is the distance between the $n^{\text{th}}$ CR and PU, $\gamma$ is the path loss exponent, and $10^{-s_n/10}$ is a random variable that reflects shadowing. The RSS is normally expressed in dBm using the transformation $\widehat{\phi}_n = 10 \log_{10} (1000 \widehat{\psi}_n)$, the result is given as
\begin{equation}\label{RSS_dB}
    \widehat{\phi}_n = 10 \log_{10} (1000 P_T c_0) - 10 \gamma \log_{10} d_n -s_n \triangleq \overline{\phi}_n - s_n, \; \text{dBm.}
\end{equation}
Denote collection of RSS measurements from all CRs as $\widehat{\boldsymbol{\phi}}=[\widehat{\phi}_1, \widehat{\phi}_2, \hdots, \widehat{\phi}_N]^T$. The conditional distribution of $\widehat{\boldsymbol{\phi}}$ (for a given $\boldsymbol{\ell}_{P}$) is $\widehat{\boldsymbol{\phi}} \sim \mathcal{N} (\overline{\boldsymbol{\phi}}, \boldsymbol{\Omega}_{\textbf{s}})$, where $\overline{\boldsymbol{\phi}} = [\overline{\phi}_1, \overline{\phi}_2, \hdots, \overline{\phi}_N]^T$, and $\boldsymbol{\Omega}_{\textbf{s}}$ is the covariance matrix of the collections of the shadowing variables $\textbf{s} = [s_1, s_2, \hdots, s_N]^T$ given by $\{\mathbf{\Omega}_\mathbf{s}\}_{mn} = \sigma_s^2 e^{- \| \boldsymbol{\ell}_{m}-\boldsymbol{\ell}_{n} \| /X_c}$, where $X_c$ is the correlation distance within which the shadowing effects among nodes are correlated.

The DoA of the PU at the $n^{\text{th}}$ CR is given as $\theta_{n} \triangleq \arctan(\frac{y_{P} - y_{n}}{x_{P} - x_{n}}) \triangleq \angle (\boldsymbol{\ell}_{P}, \boldsymbol{\ell}_{n})$. CRs perform array signal processing techniques, such as MUSIC \cite{Schmidt1986} or ESPRIT \cite{Roy1989}, to obtain DoA estimates. The estimated DoA is commonly modeled as $\widehat{\theta}_{n} \triangleq \theta_{n}+v_{n}$ \cite{Stoica1989}, where $v_{n} \sim \mathcal{N}(0, \sigma_{n}^2)$ and $\sigma_{n}^2$ is the DoA estimation error variance. We denote $\widehat{\boldsymbol{\theta}}=[\widehat{\theta}_1, \widehat{\theta}_2, \hdots, \widehat{\theta}_N]$ as collections of DoA measurements of all CRs at the fusion center. We consider two different modelings of the DoA estimation error variance, using CRB and MUSIC algorithm, respectively. The CRB of DoA estimation error variance for unbiased DoA estimators using Uniform Linear Array (ULA) is given by \cite{Stoica1989}
\begin{equation}\label{ErrVar_CRB}
   \sigma_{n,CRB}^2 = \frac{1}{(\kappa \cos \widetilde{\theta}_n)^2} \frac{6}{N_s N_a (N_a^2 - 1) \rho_{n}},
\end{equation}
where $\kappa$ is a constant determined by the signal wavelength and array spacing, $N_s$ is the number of samples, $N_a$ is the number of antennas, $\widetilde{\theta}_n$ is the array orientation with respect to the incoming DoA defined as $\widetilde{\theta}_n \triangleq \theta_{n} - \overline{\theta}_n$, where $\overline{\theta}_n$ is the orientation of the $n^{\text{th}}$ ULA, and $\rho_{n}$ is the signal-to-noise (SNR) ratio given by $\rho_{n} = \widehat{\psi}_n/P_M$, where $P_M$ is the measurement noise power at the $n^{\text{th}}$ CR. Using the definition of SNR we can simplify (\ref{ErrVar_CRB}) as
\begin{equation}\label{ErrVar_CRB_Sim}
    \sigma_{n,CRB}^2 = \frac{6 P_M}{\kappa^2 N_s N_a (N_a^2 - 1)} \frac{1}{\widehat{\psi}_n} \frac{1}{\cos^2 \widetilde{\theta}_n} = \beta f_{CRB} (\widehat{\phi}_n) \frac{1}{\cos^2 \widetilde{\theta}_n}.
\end{equation}
where $\beta \triangleq \frac{6 P_M}{\kappa^2 N_s N_a (N_a^2 - 1)}$ and $f_{CRB} (\widehat{\phi}_n) \triangleq \frac{1}{\widehat{\psi}_n}$. The DoA estimates from MUSIC are, asymptotically in the sample size, unbiased and Gaussian distributed \cite{Stoica1989}. The estimation error variance using ULA is given by \cite{Stoica1989}
\begin{equation}\label{ErrVar_Music}
    \sigma_{n,MU}^2 = \frac{1}{(\kappa \cos \widetilde{\theta}_n)^2} \frac{6}{N_s N_a (N_a^2 - 1) \rho_{n}}(1+\frac{1}{N_a \rho_{n}}) = \beta f_{MU} (\widehat{\phi}_n) \frac{1}{\cos^2 \widetilde{\theta}_n}.
\end{equation}
where $f_{MU} (\widehat{\phi}_n) \triangleq \frac{\widehat{\psi}_n + (P_M/N_a)}{\widehat{\psi}_n^2}$. Note that both (\ref{ErrVar_CRB_Sim}) and (\ref{ErrVar_Music}) depend on RSS and PU location (used in calculation of $\widetilde{\theta}_n$).

We do not assume any specific distribution of CR placements in the derivations of Sec. \ref{sec:JCRB_Fixed}. For the asymptotic CRB derived in Sec. \ref{sec:JCRB_Asym}, we assume the CRs that can hear the PU form a circle with radius $R$ and are uniformly placed in the area. For this scenario, the distribution of $\theta_n$ is given by $\theta_n \sim \mathcal {U} [0, 2 \pi)$, and the distribution of $d_n$ is derived as
\begin{subnumcases}{p_{d_n}(r)=}
    \frac{2 r}{(R^2 - R_0^2)}, & $R_0 \leq r \leq R$ \\
    0, & otherwise,
\end{subnumcases}
where $R_0$ is a guard distance to avoid overlap between CRs and PU. The CRs are placed independently within the area, which indicates independency among all $\theta_n$'s, all $d_n$'s, and pairs of $\theta_n$ and $d_n$. We further assume the array orientation error is distributed as $\widetilde{\theta}_n = \theta_n - \overline{\theta}_n \sim \mathcal{U} (- \theta_T, \theta_T)$, which means the array orientation is uniformly distributed around the incoming DoA with parameter $\theta_T$. The PU localization problem uses RSS and DoA measurements to obtain PU location estimate $\widehat{ \boldsymbol{\ell}}_P \triangleq [\widehat{x}_p, \widehat{y}_p]^T$. A brief table of notations used in the paper is given in Table \ref{tab:Notations}.

\begin{table}
\centering
\caption{Summary of Variables Used in This Paper}
\begin{tabular}{ c | l }
    \hline
    Symbol & Meaning \\ \hline \hline
    $N$, $n$ & number of CRs, index of CRs   \\
    $R$, $R_0$ & radius of: PU coverage area, guard region  \\
    $\boldsymbol{\ell}_{n}=[x_{n}, y_{n}]^T$ & 2-dimensional coordinate of the $n^{\text{th}}$ CR  \\
    $\boldsymbol{\ell}_{P}=[x_{P}, y_{P}]^T$ & 2-dimensional coordinate of the PU  \\
    $\widehat{ \boldsymbol{\ell}}_P = [\widehat{x}_P, \widehat{y}_P]^T$ & Estimated 2-dimensional coordinate of the PU  \\
    $d_n$, $\Delta x_n$, $\Delta y_n$ & distance from the $n^{th}$ CR to the PU, distance on x-axis, distance on y-axis\\
    $P_T$, $P_M$, $\rho_n$ & PU transmit power, measurement noise power, received SNR \\
    $c_0$, $\gamma$ & channel multiplicative gain at reference distance, path-loss exponent \\
    $s_n$, $\sigma_s^2$, $X_c$& shadowing variable, its variance, correlation distance \\
    $\boldsymbol{\Omega}_{\widehat{ \boldsymbol{\ell}}_P}$, $\mathbf{\Omega}_\mathbf{s}$ & Covariance matrix of: the PU location estimates, shadowing variables \\
    $\widehat{\psi}_n$, $\widehat{\phi}_n$, $\overline{\phi}_n$& RSS measurement: in Watt, in dBm, mean in dBm \\
    $\theta_n$, $\widehat{\theta}_n$ & incoming DoA and its estimate\\
    $\overline{\theta}_n$, $\widetilde{\theta}_n$, $\theta_T$ & ULA orientation, orientation error and its distribution parameter \\
    $N_s$, $N_a$, $\kappa$ & ULA parameters: number of samples, number of antennas, array constant\\
    $\sigma_{n,CRB}^2$, $\sigma_{n,MU}^2$ & DoA error variance of the $n^{th}$ CR given by: CRB, MUSIC algorithm\\
    $\textbf{F}_{\widehat{\boldsymbol{\phi}}}$, $\textbf{F}_{\widehat{\boldsymbol{\theta}} | \widehat{\boldsymbol{\phi}}}$, $\textbf{F}$ & FIM for: RSS-only, DoA given RSS, joint RSS/DoA scenarios \\
    $\alpha$, $\beta$, $\epsilon$ & supporting constants: $\alpha \triangleq \frac{c_0 P_T e^{\sigma_s^2 /(2 \epsilon)}}{\beta}$, $\beta \triangleq \frac{6 P_M}{\kappa^2 N_s N_a (N_a^2 - 1)}$, $\epsilon \triangleq \frac{100}{(\log 10)^2}$ \\
    $\delta_0$, $\eta$& deviation distance and corresponding probability used in Theorem \ref{RSS_FIM_Theorem} and \ref{Joint_FIM_Theorem}\\
    $\textbf{s}$, $\widehat{\boldsymbol{\phi}}$, $\widehat{\boldsymbol{\theta}}$ & vector form of: shadowing variables, RSS in dBm, DoA estimates \\ \hline
    \end{tabular}
\label{tab:Notations}
\end{table}
%


\section{Joint CRBs for Fixed CR Placement}
\label{sec:JCRB_Fixed}

In this section we derive the joint CRB and the corresponding bound on RMSE for a fixed CR placement, for DoA estimates obtained from both optimal estimator and MUSIC algorithm. We also derive the CRB and RMSE for RSS-only localization as a byproduct. Using RSS and DoA as measurements, the covariance matrix of unbiased estimation of PU locations $\widehat{ \boldsymbol{\ell}}_P$ is lower-bounded by the CRB
\begin{equation}\label{LowerBound}
    \boldsymbol{\Omega}_{\widehat{ \boldsymbol{\ell}}_P} \triangleq \E \left[ \left( \widehat{ \boldsymbol{\ell}}_P - \E \left[ \widehat{ \boldsymbol{\ell}}_P \right] \right) \left( \widehat{ \boldsymbol{\ell}}_P - \E \left[ \widehat{ \boldsymbol{\ell}}_P \right] \right)^T \right] \geq \textbf{F}^{-1},
\end{equation}
where $\textbf{F}$ is the $2 \times 2$ Fisher Information Matrix (FIM) given by
\begin{equation}\label{FIM_Def}
    \textbf{F} = - \E_{\widehat{\boldsymbol{\theta}}, \widehat{\boldsymbol{\phi}}} \left[ \frac{\partial^2}{\partial \boldsymbol{\ell}_{P}^2} \log p(\widehat{\boldsymbol{\theta}}, \widehat{\boldsymbol{\phi}} | \boldsymbol{\ell}_{P}) \right].
\end{equation}
Therefore the RMSE is bounded by $RMSE \geq \sqrt{ \{ \textbf{F}^{-1} \}_{11} + \{ \textbf{F}^{-1} \}_{22} }$, where $\left\{ \textbf{X} \right\}_{ij}$ denotes the $ij^{th}$ element of matrx $\textbf{X}$. Using the standard decomposition of conditional probability $ p(\widehat{\boldsymbol{\theta}}, \widehat{\boldsymbol{\phi}} | \boldsymbol{\ell}_{P}) = p(\widehat{\boldsymbol{\theta}} | \widehat{\boldsymbol{\phi}}, \boldsymbol{\ell}_{P}) p(\widehat{\boldsymbol{\phi}} | \boldsymbol{\ell}_{P})$, the FIM is decomposed as
\begin{equation}\label{FIM_Decom}
    \textbf{F} = \left\{ - \E_{\widehat{\boldsymbol{\theta}}, \widehat{\boldsymbol{\phi}}} \left[ \frac{\partial^2}{\partial \boldsymbol{\ell}_{P}^2} \log p(\widehat{\boldsymbol{\theta}} | \widehat{\boldsymbol{\phi}}, \boldsymbol{\ell}_{P}) \right] \right\} + \left\{ - \E_{\widehat{\boldsymbol{\phi}}} \left[ \frac{\partial^2}{\partial \boldsymbol{\ell}_{P}^2} \log p(\widehat{\boldsymbol{\phi}} | \boldsymbol{\ell}_{P}) \right] \right\} \triangleq \textbf{F}_{\widehat{\boldsymbol{\theta}} | \widehat{\boldsymbol{\phi}}} + \textbf{F}_{\widehat{\boldsymbol{\phi}}}.
\end{equation}
Note that $\textbf{F}_{\widehat{\boldsymbol{\phi}}}$ is the FIM for using only RSS to localize the PU, therefore its inverse bounds the localization accuracy of algorithms that only use RSS readings. In the rest of the section, we first derive the RSS-only FIM $\textbf{F}_{\widehat{\boldsymbol{\phi}}}$. We then present the results for the joint FIM $\textbf{F}$ by deriving $\textbf{F}_{\widehat{\boldsymbol{\theta}} | \widehat{\boldsymbol{\phi}}}$ for optimal DoA estimator and MUSIC algorithm, respectively.

\subsection{RSS-only CRB}
\label{subsec:RSS_CRB}

To derive the RSS-only FIM $\textbf{F}_{\widehat{\boldsymbol{\phi}}}$, we first explicitly express the logarithm of the PDF of $\widehat{\boldsymbol{\phi}}$
\begin{equation}\label{RSS_PDF}
\log p(\widehat{\boldsymbol{\phi}} | \boldsymbol{\ell}_{P}) = - \log \left[ (2 \pi)^{N/2} (\text{det} \boldsymbol{\Omega}_{\textbf{s}})^{1/2} \right] - \frac{1}{2} (\widehat{\boldsymbol{\phi}} - \overline{\boldsymbol{\phi}})^{T} \boldsymbol{\Omega}_{\textbf{s}}^{-1} (\widehat{\boldsymbol{\phi}} - \overline{\boldsymbol{\phi}}).
\end{equation}
The RSS-only FIM $\textbf{F}_{\widehat{\boldsymbol{\phi}}}$ is then given by
\begin{equation}\label{RSS_FIM}
    \textbf{F}_{\widehat{\boldsymbol{\phi}}} = \frac{1}{2} \E_{\widehat{\boldsymbol{\phi}}} \left[ \frac{\partial^2}{\partial \boldsymbol{\ell}_{P}^2} (\widehat{\boldsymbol{\phi}} - \overline{\boldsymbol{\phi}})^{T} \boldsymbol{\Omega}_{\textbf{s}}^{-1} (\widehat{\boldsymbol{\phi}} - \overline{\boldsymbol{\phi}}) \right]
\end{equation}
The elements of $\textbf{F}_{\widehat{\boldsymbol{\phi}}}$ are derived as
\begin{eqnarray}\label{RSS_FIM_Elements}
    \left\{ \textbf{F}_{\widehat{\boldsymbol{\phi}}} \right\}_{11} &=&  \frac{\partial (\widehat{\boldsymbol{\phi}} - \overline{\boldsymbol{\phi}})^{T}}{\partial x_P}  \boldsymbol{\Omega}_{\textbf{s}}^{-1} \frac{\partial (\widehat{\boldsymbol{\phi}} - \overline{\boldsymbol{\phi}})}{\partial x_P} = \epsilon \gamma^2 \boldsymbol{\Delta} \textbf{x}^T \textbf{D}^{-2} \boldsymbol{\Omega}_{\textbf{s}}^{-1} \textbf{D}^{-2} \boldsymbol{\Delta} \textbf{x} \nonumber \\
\left\{ \textbf{F}_{\widehat{\boldsymbol{\phi}}} \right\}_{22} &=&  \frac{\partial (\widehat{\boldsymbol{\phi}} - \overline{\boldsymbol{\phi}})^{T}}{\partial y_P}  \boldsymbol{\Omega}_{\textbf{s}}^{-1} \frac{\partial (\widehat{\boldsymbol{\phi}} - \overline{\boldsymbol{\phi}})}{\partial y_P} = \epsilon \gamma^2 \boldsymbol{\Delta} \textbf{y}^T \textbf{D}^{-2} \boldsymbol{\Omega}_{\textbf{s}}^{-1} \textbf{D}^{-2} \boldsymbol{\Delta} \textbf{y} \nonumber \\
\left\{ \textbf{F}_{\widehat{\boldsymbol{\phi}}} \right\}_{12} &=& \left\{ \textbf{F}_{\widehat{\boldsymbol{\phi}}} \right\}_{21} = \frac{\partial (\widehat{\boldsymbol{\phi}} - \overline{\boldsymbol{\phi}})^{T}}{\partial x_P}  \boldsymbol{\Omega}_{\textbf{s}}^{-1} \frac{\partial (\widehat{\boldsymbol{\phi}} - \overline{\boldsymbol{\phi}})}{\partial y_P} = \epsilon \gamma^2 \boldsymbol{\Delta} \textbf{x}^T \textbf{D}^{-2} \boldsymbol{\Omega}_{\textbf{s}}^{-1} \textbf{D}^{-2} \boldsymbol{\Delta} \textbf{y},
\end{eqnarray}
where $\epsilon = 100/(\log 10)^2$, and vectors and matrices are defined as $\textbf{D} \triangleq \text{diag} (d_1, d_2, \hdots, d_N)$, $\boldsymbol{\Delta} \textbf{x} \triangleq [\Delta x_1, \Delta x_2, \hdots, \Delta x_N]^T$, $\boldsymbol{\Delta} \textbf{y} \triangleq [\Delta y_1, \Delta y_2, \hdots, \Delta y_N]^T$, and $\Delta x_n = x_P - x_n$ and $\Delta y_n = y_P - y_n$. Detailed derivations are provided in Appendix A. To obtain a compact expression of $\textbf{F}_{\widehat{\boldsymbol{\phi}}}$, let's define $\textbf{L} = [\boldsymbol{\Delta} \textbf{x}, \boldsymbol{\Delta} \textbf{y}]^T$ and $\boldsymbol{\Lambda} = \frac{1}{\epsilon \gamma^2} \textbf{D}^{2} \boldsymbol{\Omega}_{\textbf{s}} \textbf{D}^{2}$. Then it is straightforward to verify that $\textbf{F}_{\widehat{\boldsymbol{\phi}}} = \textbf{L} \boldsymbol{\Lambda}^{-1} \textbf{L}^T$. Therefore, the RMSE of RSS-only PU localization is bounded by
\begin{equation}\label{RMSE_Fix_RSS}
    \text{RMSE}_{R,F} \geq \sqrt{ \{ \textbf{F}_{\widehat{\boldsymbol{\phi}}}^{-1} \}_{11} + \{ \textbf{F}_{\widehat{\boldsymbol{\phi}}}^{-1} \}_{22} },
\end{equation}
where the subscript $R,F$ stands for RSS-only bound for fixed placement.

\subsection{Joint CRB using Optimal DoA Estimator}
\label{subsec:JCRB_CRB}

In this subsection we derive the joint CRB with DoA estimations given by the optimal estimator, using the DoA error variance given by $\sigma_{n,CRB}^2$. To derive the conditional FIM of DoA given RSS $\textbf{F}_{\widehat{\boldsymbol{\theta}} | \widehat{\boldsymbol{\phi}}}$, we first explicitly express logarithm of the conditional PDF $p(\widehat{\boldsymbol{\theta}} | \widehat{\boldsymbol{\phi}}, \boldsymbol{\ell}_{P})$ as
\begin{equation} \label{DoA_PDF}
  \log p(\widehat{\boldsymbol{\theta}} | \widehat{\boldsymbol{\phi}}, \boldsymbol{\ell}_{P}) = \sum_{n=1}^{N} \left\{ \log(\cos \widetilde{\theta}_n) - \frac{1}{2} \log \left[ 2 \pi \beta f_{CRB} (\widehat{\phi}_n) \right] - \frac{\cos^2 \widetilde{\theta}_n (\widehat{\theta}_n - \theta_n)^2}{2 \beta f_{CRB} (\widehat{\phi}_n)} \right\}.
\end{equation}
Then $\textbf{F}_{\widehat{\boldsymbol{\theta}} | \widehat{\boldsymbol{\phi}}}$ is given by
\begin{equation}\label{DoA_FIM}
    \textbf{F}_{\widehat{\boldsymbol{\theta}} | \widehat{\boldsymbol{\phi}}} = \sum_{n=1}^{N} \left\{ \E_{\widehat{\boldsymbol{\theta}}, \widehat{\boldsymbol{\phi}}} \left[ \frac{\partial^2 g_{n}}{\partial \boldsymbol{\ell}_{P}^2} \right] - \E_{\widehat{\boldsymbol{\theta}}, \widehat{\boldsymbol{\phi}}} \left[ \frac{\partial^2 h_{n}}{\partial \boldsymbol{\ell}_{P}^2} \right] \right\}
\end{equation}
where $g_{n} \triangleq \frac{\cos^2 \widetilde{\theta}_n (\widehat{\theta}_n - \theta_n)^2}{2 \beta f_{CRB} (\widehat{\phi}_n)}$ and $h_{n} \triangleq \log(\cos \widetilde{\theta}_n)$. The elements of $\textbf{F}_{\widehat{\boldsymbol{\theta}} | \widehat{\boldsymbol{\phi}}}$ are derived as
\begin{eqnarray}\label{DoA_FIM_Elements}
  \left\{ \textbf{F}_{\widehat{\boldsymbol{\theta}} | \widehat{\boldsymbol{\phi}}} \right\}_{11} &=& \sum_{n=1}^{N} \frac{\Delta y_n^2 }{d_n^4} \left\{ \frac{\alpha \cos^2 \widetilde{\theta}_n}{d_n^{\gamma}}  + 2 \tan^2 \widetilde{\theta}_n \right\} \nonumber \\
  \left\{ \textbf{F}_{\widehat{\boldsymbol{\theta}} | \widehat{\boldsymbol{\phi}}} \right\}_{22} &=& \sum_{n=1}^{N} \frac{\Delta x_n^2 }{d_n^4} \left\{ \frac{\alpha \cos^2 \widetilde{\theta}_n}{d_n^{\gamma}} + 2 \tan^2 \widetilde{\theta}_n \right\} \nonumber \\
  \left\{ \textbf{F}_{\widehat{\boldsymbol{\theta}} | \widehat{\boldsymbol{\phi}}} \right\}_{12} &=& \left\{ \textbf{F}_{\widehat{\boldsymbol{\theta}} | \widehat{\boldsymbol{\phi}}} \right\}_{21} = - \sum_{n=1}^{N} \frac{\Delta x_n \Delta y_n}{d_n^4} \left\{ \frac{\alpha \cos^2 \widetilde{\theta}_n}{d_n^{\gamma}} + 2 \tan^2 \widetilde{\theta}_n \right\}.
\end{eqnarray}
where $\alpha \triangleq c_0 P_T e^{\sigma_s^2 /(2 \epsilon)} / \beta$. Detailed derivations are provided in Appendix A. To obtain a compact expression of $\textbf{F}_{\widehat{\boldsymbol{\theta}} | \widehat{\boldsymbol{\phi}}}$, let's define $\textbf{P} = [ \boldsymbol{\Delta} \textbf{y}, - \boldsymbol{\Delta} \textbf{x} ]^T$ and $\boldsymbol{\Gamma} = \text{diag} (\gamma_{1}, \gamma_{2}, \hdots, \gamma_{N})$, where $\gamma_{n} = \frac{1}{d_n^4} \left\{ \frac{\alpha \cos^2 \widetilde{\theta}_n}{ d_n^{\gamma}} + 2 \tan^2 \widetilde{\theta}_n \right\}$. Then it is straightforward to verify that $\textbf{F}_{\widehat{\boldsymbol{\theta}} | \widehat{\boldsymbol{\phi}}} = \textbf{P} \boldsymbol{\Gamma} \textbf{P}^T$. Therefore, the joint FIM and the corresponding RMSE are given by $\textbf{F}_{J,F,C} = \textbf{P} \boldsymbol{\Gamma} \textbf{P}^T + \textbf{L} \boldsymbol{\Lambda}^{-1} \textbf{L}^T$ and
\begin{equation}
\text{RMSE}_{J,F,C} \geq \sqrt{ \{ \textbf{F}_{J,F,C}^{-1} \}_{11} + \{ \textbf{F}_{J,F,C}^{-1} \}_{22} } \label{RMSE_Fix_Joint},
\end{equation}
where the subscript $J,F,C$ represents joint CRB for fixed placement using CRB of DoA estimation error variance.

\subsection{Joint CRB using MUSIC Algorithm}
\label{subsec:JCRB_MUSIC}

In this section we derive the joint CRB with DoA estimations given by the MUSIC algorithm, using the error variance given by $\sigma_{n,MU}^2$. The conditional FIM of DoA given RSS $\textbf{F}_{\widehat{\boldsymbol{\theta}} | \widehat{\boldsymbol{\phi}}}$ is derived by replacing $f_{CRB} (\widehat{\phi}_n)$ in (\ref{DoA_PDF}) with $f_{MU} (\widehat{\phi}_n)$ given by (\ref{ErrVar_Music}). Applying the results in Appendix A, we obtain $\textbf{F}_{\widehat{\boldsymbol{\theta}} | \widehat{\boldsymbol{\phi}}} = \textbf{P} \boldsymbol{\Delta} \textbf{P}^T$, where $\boldsymbol{\Delta} = \text{diag} (\delta_{1}, \delta_{2}, \hdots, \delta_{N})$ and
\begin{equation}\label{DoA_FIM_Define_MUSIC}
  \delta_{n} = \frac{1}{d_n^4} \left\{ \frac{\cos^2 \widetilde{\theta}_n}{\beta } \left[ \frac{c_0 P_T e^{\sigma_s^2/(2 \epsilon)}}{d_n^{\gamma}} - \frac{P_M}{N_a} + \frac{P_M^2}{N_a^2} \E_{\widehat{\boldsymbol{\phi}}} \left( \frac{1}{\widehat{\psi}_n + \frac{P_M}{N_a}} \right) \right] + 2 \tan^2 \widetilde{\theta}_n \right\}.
\end{equation}
Since the RSS-only FIM is independent of DoA estimation algorithm, the joint FIM and the corresponding RMSE using MUSIC algorithm are given by $\textbf{F}_{J,F,M} = \textbf{P} \boldsymbol{\Delta} \textbf{P}^T + \textbf{L} \boldsymbol{\Lambda}^{-1} \textbf{L}^T$ and
\begin{equation}
\text{RMSE}_{J,F,M} \geq \sqrt{ \{ \textbf{F}_{J,F,M}^{-1} \}_{11} + \{ \textbf{F}_{J,F,M}^{-1} \}_{22} } \label{RMSE_MUSIC},
\end{equation}
where the subscript $J,F,M$ represents joint CRB for fixed placement using MUSIC error variance. Note that the above calculations are based on a particular fixed node placement. If the nodes are randomly placed within the area according to some distribution, say uniform distribution, integrating (\ref{RMSE_Fix_RSS}), (\ref{RMSE_Fix_Joint}) and (\ref{RMSE_MUSIC}) for $2N$ times over all node coordinates gives the theoretical average performance bound.

\section{Asymptotic Joint CRB for Uniform Random CR Placement}
\label{sec:JCRB_Asym}

The bounds on RMSE derived in Section \ref{sec:JCRB_Fixed} are useful to evaluate the achievable localization performance for a given CR placement. Numerical integration or ensemble averaging need to be performed to get the average accuracy for random CR placements. In this section we characterize the achievable localization accuracy for the most common random placement, i.e. the uniform random placement, by deriving the closed-form asymptotic CRB for such case, assuming i.i.d. shadowing and optimal DoA estimator. We also provide theoretical analysis on the required number of CRs as a function of system parameters to make the asymptotic bound tight.

\subsection{Asymptotic RSS-only CRB}
\label{subsec:Asym_RSS}

For the RSS-only FIM $\textbf{F}_{\widehat{\boldsymbol{\phi}}}$ with i.i.d. shadowing, we first rewrite it from (\ref{RSS_FIM_Elements}) as
\begin{equation}\label{RSS_FIM_Rewrite}
    \frac{1}{N} \textbf{F}_{\widehat{\boldsymbol{\phi}}} = \frac{\epsilon \gamma^2}{\sigma_s^2 N} \sum_{n=1}^{N} d_n^{-2} \left[\begin{array}{c}
                 \cos \theta_n \\
                 \sin \theta_n \\
                 \end{array}\right] [\cos \theta_n, \sin \theta_n],
\end{equation}
in which we use the fact that $[\Delta x_n, \Delta y_n] = [d_n \cos \theta_n, d_n \sin \theta_n]$. From (\ref{RSS_FIM_Rewrite}) we observe that $\frac{1}{N} \textbf{F}_{\widehat{\boldsymbol{\phi}}}$ can be interpreted as the ensemble average of a function of random variables $d_n$ and $\theta_n$, with statistical mean given by
\begin{equation}\label{RSS_FIM_Mean}
    \frac{1}{N} \E \left[\textbf{F}_{\widehat{\boldsymbol{\phi}}} \right] = \frac{\epsilon \gamma^2}{2 \sigma_s^2} \E_{d_n} [d_n^{-2}] \textbf{I}_2 = \frac{\epsilon \gamma^2 \log (R/R_0)}{\sigma_s^2 (R^2-R_0^2)} \textbf{I}_2 \triangleq f_{\widehat{\boldsymbol{\phi}}} (R, \gamma, \sigma_s^2) \textbf{I}_2,
\end{equation}
where we obtain the first equality from (\ref{RSS_FIM_Rewrite}) based on the i.i.d. distribution of $d_n$ and $\theta_n$, and $\E_{\theta_n} \left\{  [\cos \theta_n, \sin \theta_n]^T [\cos \theta_n, \sin \theta_n] \right\}=\frac{1}{2} \textbf{I}_2$, where $\textbf{I}_2$ is a $2 \times 2$ identity matrix. The deviation probability of the ensemble average $\frac{1}{N} \textbf{F}_{\widehat{\boldsymbol{\phi}}}$ from the statistical mean $\frac{1}{N} \E \left[\textbf{F}_{\widehat{\boldsymbol{\phi}}} \right]$ is given by the following theorem.

\begin{theorem}
\emph{(Deviation Probability of RSS-only FIM)}
\label{RSS_FIM_Theorem}
For any $\delta_0 > 0$, we have
\begin{equation}\label{Theorem1_Eqn1}
    \text{Pr} \left\{ \left\| \frac{1}{N} \textbf{F}_{\widehat{\boldsymbol{\phi}}} - \frac{1}{N} \E \left[ \textbf{F}_{\widehat{\boldsymbol{\phi}}} \right] \right\|_F > \delta_0 \right\} < \frac{2 \epsilon^2 \gamma^4}{\sigma_s^4 \delta_0^2 N} \left[ \frac{1}{2 R^2 R_0^2} - \frac{\log^2(R/R_0)}{(R^2-R_0^2)^2} \right],
\end{equation}
where $\left\| \cdot \right\|$ represents Frobenius matrix norm.
\end{theorem}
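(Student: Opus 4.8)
\emph{Proof sketch (plan).} The essential point is that the rewriting (\ref{RSS_FIM_Rewrite}) exhibits $\frac{1}{N}\textbf{F}_{\widehat{\boldsymbol{\phi}}}$ as the empirical average $\frac{1}{N}\sum_{n=1}^{N}\textbf{A}_n$ of the random matrices
\[
\textbf{A}_n \triangleq \frac{\epsilon\gamma^2}{\sigma_s^2}\, d_n^{-2} \left[\begin{array}{c} \cos\theta_n \\ \sin\theta_n \end{array}\right]\left[\cos\theta_n,\ \sin\theta_n\right],
\]
which are i.i.d.\ because the pairs $(d_n,\theta_n)$ are, and whose common mean is $\E[\textbf{A}_n]=f_{\widehat{\boldsymbol{\phi}}}(R,\gamma,\sigma_s^2)\,\textbf{I}_2$ by (\ref{RSS_FIM_Mean}). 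The plan is a second-moment (Chebyshev-type) argument: bound the deviation probability by $\delta_0^{-2}$ times the expected squared Frobenius norm of the centered average, and then exploit the i.i.d.\ structure to reduce that quantity to a single-node variance divided by $N$.

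First I would apply Markov's inequality to the nonnegative random variable $\big\|\frac{1}{N}\textbf{F}_{\widehat{\boldsymbol{\phi}}}-\frac{1}{N}\E[\textbf{F}_{\widehat{\boldsymbol{\phi}}}]\big\|_F^2$, which shows the deviation probability is at most $\delta_0^{-2}\,\E\big[\big\|\frac{1}{N}\sum_{n=1}^{N}(\textbf{A}_n-\E[\textbf{A}_n])\big\|_F^2\big]$. Expanding the squared Frobenius norm as the double sum $\sum_{m,n}\text{tr}\!\big[(\textbf{A}_m-\E[\textbf{A}_m])^T(\textbf{A}_n-\E[\textbf{A}_n])\big]$ and using both the independence of distinct $(d_m,\theta_m)$, $(d_n,\theta_n)$ and the zero mean of the centered summands, every off-diagonal term ($m\neq n$) vanishes in expectation, so the bound reduces to $\frac{1}{N}\big(\E[\|\textbf{A}_1\|_F^2]-\|\E[\textbf{A}_1]\|_F^2\big)$.

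It then remains to evaluate the two moments. Since $[\cos\theta_n,\sin\theta_n]^T$ is a unit vector, the rank-one matrix sitting inside $\textbf{A}_n$ has unit Frobenius norm, so $\|\textbf{A}_n\|_F^2=(\epsilon\gamma^2/\sigma_s^2)^2 d_n^{-4}$; integrating $d_n^{-4}$ against the density $p_{d_n}(r)$ gives $\E[d_n^{-4}]=\frac{1}{R^2R_0^2}$, hence $\E[\|\textbf{A}_1\|_F^2]=\frac{\epsilon^2\gamma^4}{\sigma_s^4 R^2 R_0^2}$. For the mean term, $\|\E[\textbf{A}_1]\|_F^2=f_{\widehat{\boldsymbol{\phi}}}^2\,\|\textbf{I}_2\|_F^2=\frac{2\epsilon^2\gamma^4\log^2(R/R_0)}{\sigma_s^4(R^2-R_0^2)^2}$. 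Subtracting, factoring out $2\epsilon^2\gamma^4/\sigma_s^4$, and dividing by $N\delta_0^2$ reproduces exactly the right-hand side of (\ref{Theorem1_Eqn1}).

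The computation is largely bookkeeping; the one step deserving genuine care is the cancellation of the off-diagonal terms, which hinges precisely on the mutual independence of distinct $(d_n,\theta_n)$ pairs asserted in the system model — pairwise independence already suffices, and no control of higher moments is needed. A secondary point to verify is simply that the negative moments $\E[d_n^{-2}]$ and $\E[d_n^{-4}]$ are finite, which holds because the guard radius $R_0>0$ keeps $d_n$ bounded away from zero; the inequality is strict since the average is not almost surely equal to its mean.
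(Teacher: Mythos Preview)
Your proposal is correct and follows essentially the same route as the paper: both apply Markov's inequality to the squared Frobenius norm (the paper packages this step as a ``Chebyshev inequality for random matrices''), then reduce the expected squared deviation of the empirical average to the single-node variance $\frac{1}{N}\bigl(\E[\|\textbf{A}_1\|_F^2]-\|\E[\textbf{A}_1]\|_F^2\bigr)$ using the i.i.d.\ structure, and finally evaluate $\E[d_n^{-4}]=1/(R^2R_0^2)$ and $\|\E[\textbf{A}_1]\|_F^2=2f_{\widehat{\boldsymbol{\phi}}}^2$ to obtain the stated bound. Your sketch is in fact a bit more explicit than the paper's appendix about why the cross terms vanish and about the role of the guard radius $R_0>0$ in keeping the negative moments finite.
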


The proof of the theorem is based on Chebyshev's inequality for random matrix, and all details are provided in Appendix B. The intuition from Theorem \ref{RSS_FIM_Theorem} is that we can well-approximate $\frac{1}{N} \textbf{F}_{\widehat{\boldsymbol{\phi}}}$ with $\frac{1}{N} \E \left[\textbf{F}_{\widehat{\boldsymbol{\phi}}} \right]$ when $N$ is large enough. In order for the deviation probability to be smaller than a pre-defined threshold $\eta \in (0,1)$, we need to bound the right-hand-side of (\ref{Theorem1_Eqn1}) by $\eta$. The required number of CRs is then bounded by
\begin{equation}\label{Required_N_RSS}
    N \geq \frac{2 \epsilon^2 \gamma^4}{\sigma_s^4 \delta_0^2 \eta} \left[ \frac{1}{2 R^2 R_0^2} - \frac{\log^2(R/R_0)}{(R^2-R_0^2)^2} \right].
\end{equation}
Applying the approximation, the RMSE of RSS-only algorithms for uniform random CR placement is given by
\begin{equation}\label{RMSE_Uniform_RSS}
\text{RMSE}_{R,U} \geq \left(\frac{2}{N f_{\widehat{\boldsymbol{\phi}}} (R, \gamma, \sigma_s^2) } \right)^{1/2},
\end{equation}
where the subscript $R,U$ represents RSS-only CRB for uniform placement.

\subsection{Asymptotic Joint CRB}
\label{subsec:Asym_Joint}

We apply similar procedure in Sec. \ref{subsec:Asym_RSS} to derive the asymptotic joint CRB from $\textbf{F} = \textbf{F}_{\widehat{\boldsymbol{\phi}}} + \textbf{F}_{\widehat{\boldsymbol{\theta}} | \widehat{\boldsymbol{\phi}}}$. We first rewrite $\textbf{F}_{\widehat{\boldsymbol{\theta}} | \widehat{\boldsymbol{\phi}}}$ from (\ref{DoA_FIM_Elements}) as
\begin{equation}\label{DoA_FIM_Rewrite}
    \frac{1}{N} \textbf{F}_{\widehat{\boldsymbol{\theta}} | \widehat{\boldsymbol{\phi}}} = \frac{1}{N} \sum_{n=1}^{N} \left[ \alpha d_n^{-(\gamma+2)} \cos^2 \widetilde{\theta}_n + 2 d_n^{-2} \tan^2 \widetilde{\theta}_n \right] \left[\begin{array}{c}
                 \sin \theta_n \\
                 - \cos \theta_n \\
                 \end{array}\right] [\sin \theta_n, - \cos \theta_n].
\end{equation}
From (\ref{DoA_FIM_Rewrite}) we observe that $\frac{1}{N} \textbf{F}_{\widehat{\boldsymbol{\theta}} | \widehat{\boldsymbol{\phi}}}$ can be interpreted as the ensemble average of a function of random variables $d_n$, $\theta_n$ and $\widetilde{\theta}_n$, with statistical mean given by
\begin{eqnarray}\label{DoA_FIM_Mean}
    \frac{1}{N} \E \left[ \textbf{F}_{\widehat{\boldsymbol{\theta}} | \widehat{\boldsymbol{\phi}}} \right] &=& \frac{1}{(R^2-R_0^2)} \left\{ 2 \log (R/R_0) \left(\frac{\tan \theta_T}{\theta_T} - 1\right) - \frac{\alpha}{2 \gamma} (R^{- \gamma} - R_0^{- \gamma}) \left( \frac{\sin 2 \theta_T}{2 \theta_T} +1\right) \right\} \textbf{I}_2,\nonumber \\
&=& f_{\widehat{\boldsymbol{\theta}} | \widehat{\boldsymbol{\phi}}} (R, \gamma, \sigma_s^2, \theta_T, \beta) \textbf{I}_2.
\end{eqnarray}
The deviation probability of the ensemble average $\frac{1}{N} \textbf{F}$ from the statistical mean $\frac{1}{N} \E \left[ \textbf{F} \right] = \frac{1}{N} \E \left[\textbf{F}_{\widehat{\boldsymbol{\phi}}} \right] + \frac{1}{N} \E \left[ \textbf{F}_{\widehat{\boldsymbol{\theta}} | \widehat{\boldsymbol{\phi}}} \right]$ is given by the following theorem.

\begin{theorem}
\emph{(Deviation Probability of Joint FIM)}
\label{Joint_FIM_Theorem}
For any $\delta_0 > 0$, we have
\begin{equation}\label{Theorem2_Eqn1}
    \text{Pr} \left\{ \left\| \frac{1}{N} \textbf{F} - \frac{1}{N} \E \left[ \textbf{F} \right] \right\|_F > \delta_0 \right\} < \frac{1}{N \delta_0^2} \left\{ \frac{\epsilon^2 \gamma^4}{\sigma_s^4 R^2 R_0^2}  + \E \left[ f_n^2 \right] - \frac{1}{2} \left[ \E \left[ f_n \right] + \frac{2 \epsilon \gamma^2 \log(R/R_0)}{\sigma_s^2 (R^2 - R_0^2)}\right]^2  \right\}.
\end{equation}
where $f_n \triangleq \alpha d_n^{-(\gamma+2)} \cos^2 \widetilde{\theta}_n + 2 d_n^{-2} \tan^2 \widetilde{\theta}_n$, $\E \left[ f_n \right] = 2 f_{\widehat{\boldsymbol{\theta}} | \widehat{\boldsymbol{\phi}}} (R, \gamma, \sigma_s^2, \theta_T, \beta)$ and
\begin{eqnarray}\label{Theorem2_Eqn2}
  \E \left[ f_n^2 \right] &=& \frac{4 \tan^5 \theta_T}{5 \theta_T R^2 R_0^2} \times {_2F_1} \left( \frac{5}{2}, 1, \frac{7}{2}, -\tan^2 \theta_T\right) - \frac{2 \alpha \left( R^{-(\gamma+2)} - R_0^{-(\gamma+2)}\right)}{(\gamma+2)(R^2-R_0^2)} \left( 2 - \frac{\sin 2 \theta_T}{\theta_T} \right) \nonumber \\
&& - \frac{\alpha^2 \left( \sin 4 \theta_T + 8 \sin 2 \theta_T + 36 \theta_T \right) \left[ R^{-2(\gamma+1)} - R_0^{-2(\gamma+1)} \right]}{32 \theta_T (\gamma+1) (R^2 -R_0^2)}
\end{eqnarray}
and ${_2F_1}(x)$ is the hypergeometric function.
\end{theorem}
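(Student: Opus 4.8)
\emph{Proof sketch.} The plan is to follow the template of Theorem~\ref{RSS_FIM_Theorem}: write $\frac{1}{N}\textbf{F}$ as a normalized sum of i.i.d.\ random matrices and apply Markov's inequality (Chebyshev for random matrices) to the squared Frobenius norm. Introduce the orthonormal pair $\textbf{u}_n \triangleq [\cos\theta_n,\sin\theta_n]^T$, $\textbf{v}_n \triangleq [\sin\theta_n,-\cos\theta_n]^T$; then (\ref{RSS_FIM_Rewrite}) and (\ref{DoA_FIM_Rewrite}) give $\frac{1}{N}\textbf{F} = \frac{1}{N}\sum_{n=1}^{N}\textbf{M}_n$ with $\textbf{M}_n \triangleq g_n \textbf{u}_n\textbf{u}_n^T + f_n \textbf{v}_n\textbf{v}_n^T$, where $g_n \triangleq \frac{\epsilon\gamma^2}{\sigma_s^2}d_n^{-2}$ and $f_n$ is as defined in the statement. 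Since the CR positions are independent, the $\textbf{M}_n$ are i.i.d.; using $\|\cdot\|_F^2=\mathrm{tr}(\cdot^T\cdot)$, the bilinearity of the trace, and $\E[\textbf{M}_n]=\frac1N\E[\textbf{F}]$,
\[
  \text{Pr}\!\left\{\left\|\tfrac1N\textbf{F}-\tfrac1N\E[\textbf{F}]\right\|_F>\delta_0\right\}\le\frac{1}{\delta_0^2}\E\left\|\tfrac1N\sum_{n=1}^N(\textbf{M}_n-\E\textbf{M}_n)\right\|_F^2=\frac{1}{N\delta_0^2}\left(\E\|\textbf{M}_n\|_F^2-\|\E\textbf{M}_n\|_F^2\right).
\]

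Next I would evaluate the two moments. Because $\textbf{M}_n$ is symmetric and $\textbf{u}_n^T\textbf{v}_n=0$, $\|\textbf{M}_n\|_F^2=\mathrm{tr}(\textbf{M}_n^2)=g_n^2+2g_nf_n(\textbf{u}_n^T\textbf{v}_n)^2+f_n^2=g_n^2+f_n^2$, so $\E\|\textbf{M}_n\|_F^2=\E[g_n^2]+\E[f_n^2]$; the density $p_{d_n}$ yields $\E[g_n^2]=\frac{\epsilon^2\gamma^4}{\sigma_s^4}\E[d_n^{-4}]=\frac{\epsilon^2\gamma^4}{\sigma_s^4 R^2 R_0^2}$. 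For the mean, (\ref{RSS_FIM_Mean}) and (\ref{DoA_FIM_Mean}) (which exploit that $\widetilde\theta_n$ is uniform on $(-\theta_T,\theta_T)$ independently of $\theta_n$, so $\E[f_n\textbf{v}_n\textbf{v}_n^T]=\E[f_n]\cdot\frac12\textbf{I}_2$, and similarly for the RSS term) give $\E\textbf{M}_n=\tfrac1N\E[\textbf{F}]=\bigl(f_{\widehat{\boldsymbol{\phi}}}+\tfrac12\E[f_n]\bigr)\textbf{I}_2$, hence $\|\E\textbf{M}_n\|_F^2=2\bigl(f_{\widehat{\boldsymbol{\phi}}}+\tfrac12\E[f_n]\bigr)^2=\tfrac12\bigl(\E[f_n]+\tfrac{2\epsilon\gamma^2\log(R/R_0)}{\sigma_s^2(R^2-R_0^2)}\bigr)^2$. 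Substituting these into the Markov bound immediately gives (\ref{Theorem2_Eqn1}).

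The remaining, and by far the most laborious, step is computing $\E[f_n^2]$ in closed form, i.e.\ (\ref{Theorem2_Eqn2}). Expanding $f_n^2=\alpha^2 d_n^{-2(\gamma+2)}\cos^4\widetilde\theta_n+4\alpha d_n^{-(\gamma+4)}\sin^2\widetilde\theta_n+4d_n^{-4}\tan^4\widetilde\theta_n$ (using $\cos^2\widetilde\theta_n\tan^2\widetilde\theta_n=\sin^2\widetilde\theta_n$) and using the mutual independence of $d_n$ and $\widetilde\theta_n$, each term splits into a radial moment $\E[d_n^{-k}]=\frac{2}{R^2-R_0^2}\int_{R_0}^{R}r^{-(k-1)}\,dr$ and an angular moment $\frac{1}{2\theta_T}\int_{-\theta_T}^{\theta_T}(\cdot)\,d\phi$. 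The $\cos^4$ and $\sin^2$ angular integrals are elementary and produce the first and second terms of (\ref{Theorem2_Eqn2}); for the $\tan^4$ term, the substitution $u=\tan\phi$ followed by the Euler integral $\int_0^1 w^{3/2}(1+a^2w)^{-1}dw=\frac{2}{5}\,{_2F_1}(1,\tfrac52;\tfrac72;-a^2)$ turns $\int_0^{\theta_T}\tan^4\phi\,d\phi$ into $\frac{\tan^5\theta_T}{5}\,{_2F_1}(\tfrac52,1,\tfrac72,-\tan^2\theta_T)$, giving the hypergeometric term. Collecting the three contributions yields (\ref{Theorem2_Eqn2}), and the proof is complete. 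The only real difficulty is the bookkeeping in this last computation and recognizing the $\tan^4$ angular integral as a Gauss hypergeometric function; the probabilistic part is a verbatim adaptation of Theorem~\ref{RSS_FIM_Theorem}.
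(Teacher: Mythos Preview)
Your proof is correct and follows the same approach as the paper: apply the matrix Chebyshev/Markov inequality to $\tfrac1N\textbf{F}$ and compute $\E\bigl\|\tfrac1N\textbf{F}-\tfrac1N\E[\textbf{F}]\bigr\|_F^2$ using the i.i.d.\ structure of the per-node summands. Where the paper simply states the covariance (\ref{Joint_FIM_Covariance}) and defers to ``tedious manipulations of (\ref{RSS_FIM_Rewrite}) and (\ref{DoA_FIM_Rewrite}),'' you supply the missing details---the orthogonal rank-one decomposition $\textbf{M}_n=g_n\textbf{u}_n\textbf{u}_n^T+f_n\textbf{v}_n\textbf{v}_n^T$ yielding $\|\textbf{M}_n\|_F^2=g_n^2+f_n^2$, and the explicit evaluation of $\E[f_n^2]$ including the hypergeometric identification of $\int_0^{\theta_T}\tan^4\phi\,d\phi$---so your sketch is in fact more complete than the paper's own proof.
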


The proof of the theorem is provided in Appendix B. The intuition from Theorem \ref{Joint_FIM_Theorem} is that we can well-approximate $\frac{1}{N} \textbf{F}$ with $\frac{1}{N} \E \left[\textbf{F}\right]$ when $N$ is large enough. In order for the deviation probability to be smaller than $\eta$, we need to bound the right-hand-side of (\ref{Theorem2_Eqn1}) by $\eta$. The required number of CRs is then bounded by
\begin{equation}\label{Required_N_Joint}
    N \geq \frac{1}{\eta \delta_0^2} \left\{ \frac{\epsilon^2 \gamma^4}{\sigma_s^4 R^2 R_0^2}  + \E \left[ f_n^2 \right] - \frac{1}{2} \left[ \E \left[ f_n \right] + \frac{2 \epsilon \gamma^2 \log(R/R_0)}{\sigma_s^2 (R^2 - R_0^2)}\right]^2  \right\}.
\end{equation}
Applying the approximation, the joint FIM and the corresponding RMSE are given by $\textbf{F}_{J,U,C} = N \left[ f_{\widehat{\boldsymbol{\phi}}} (R, \gamma, \sigma_s^2) + f_{\widehat{\boldsymbol{\theta}} | \widehat{\boldsymbol{\phi}}} (R, \gamma, \sigma_s^2, \theta_T, \beta) \right] \textbf{I}_2$ and
\begin{equation}\label{RMSE_Uniform_Joint}
     \text{RMSE}_{J,U,C} \geq \left\{ \frac{2}{N \left[ f_{\widehat{\boldsymbol{\phi}}} (R, \gamma, \sigma_s^2) + f_{\widehat{\boldsymbol{\theta}} | \widehat{\boldsymbol{\phi}}} (R, \gamma, \sigma_s^2, \theta_T, \beta) \right]} \right\}^{1/2},
\end{equation}
where the subscript $J,U,C$ represents joint CRB for uniform placement using CRB of DoA estimation error variance. Note that different from the RMSE expression (\ref{RMSE_Fix_Joint}) which is conditioned on a fixed CR placement, the result in (\ref{RMSE_Uniform_Joint}) provides asymptotic closed-form expression of RMSE for uniform CR placement. Since the Chebyshev's inequality is not a very tight bound on the derivation probability, the results we obtained may overestimate the required number of CRs. Although similar analysis can be performed based on more accurate mathematical tools, such as central limit theorem (approximate the random matrix as Gaussian distributed) or Chernoff bound (requires the tail of the random variable distribution to be Gaussian-like) \cite{Proakis2001}, the results obtained by the Chebyshev's inequality are more general in that they do not assume any distribution of the random matrix. Therefore, (\ref{Required_N_RSS}) and (\ref{Required_N_Joint}) mainly serve to provide insights of impact of system parameters on the required number of CRs to make the asymptotic bounds accurate.


\section{Numerical Results}
\label{sec:Simulation}

In this section we present numerical results to verify the derived CRBs and study the achievable localization performance for different node density, channel conditions and array parameters. We compare the derived bounds with the joint CRB with constant RSS derived in Eqn (6) of \cite{Penna2011} to show the theoretical advantage of the new bound. Performance of practical localization algorithms is also included to investigate how close the CRBs can be achieved. A numerical study on the accuracy of the asymptotic CRBs derived in Section IV is presented at the end of the section, highlighting the conditions when the asymptotic CRBs tightly approximate the exact bounds. We start by a brief introduction of the considered localization algorithms.

\subsection{Practical Localization Algorithms}

We have selected two cooperative localization algorithms, one for RSS-only localization and the other one for joint RSS/DoA localization. For the RSS-only case, Weighted Centroid Localization (WCL) is a range-free algorithm that provides low-complexity and coarse-grained location estimates. In this technique, PU location is approximated as the weighted average of all secondary user positions within its transmission range, where the weights are proportional to RSS of each CR. The WCL location estimate is given by $\widehat{ \boldsymbol{\ell}}_{P,WCL} = \sum_{n=1}^{N} \widehat{\psi}_n \boldsymbol{\ell}_n / \sum_{n=1}^{N} \widehat{\psi}_n$. WCL does not require knowledge of PU transmit power and channel conditions, and is robust to shadowing variance, compared to range-based algorithms such as Lateration \cite{Wang2011}.

For the joint RSS/DoA case, weighted Stansfield algorithm is shown to outperform other DoA fusion algorithms, such as maximum likelihood algorithm solved by iteration methods and linear least square algorithm \cite{Wang2012}. The weights for each DoA are their estimated error variances, which are function of RSS. The weighted Stansfield location estimate is given by $\widehat{ \boldsymbol{\ell}}_{P,St} = (\textbf{A}_{St}^T \textbf{W}^{-1} \textbf{A}_{St})^{-1} \textbf{A}_{St}^T \textbf{W}^{-1} \textbf{b}_{St}$, where
\begin{equation}\label{Stan_Def}
  \textbf{A}_{St} = \left[\begin{array}{cc}
                           \sin(\widehat{\theta}_1) & - \cos(\widehat{\theta}_1) \\
                           \vdots      & \vdots \\
                           \sin(\widehat{\theta}_N) & - \cos(\widehat{\theta}_N) \\
                           \end{array}\right],
\textbf{b}_{St} = \left[\begin{array}{c}
                           x_1 \sin(\widehat{\theta}_1) - y_1 \cos(\widehat{\theta}_1) \\
                           \vdots \\
                           x_1 \sin(\widehat{\theta}_N) - y_1 \cos(\widehat{\theta}_N) \\
                           \end{array}\right]
\end{equation}
and the weighting matrix is defined as $\textbf{W} = \text{diag} [\widehat{\sigma}_1^2, \hdots, \widehat{\sigma}_N^2]$, where $\widehat{\sigma}_n^2$ is obtained by replacing true DoA $\theta_n$ with estimated DoA $\widehat{\theta}_n$ in (\ref{ErrVar_CRB_Sim}) or (\ref{ErrVar_Music}).

\subsection{Simulation Settings}

We summarize the basic parameter settings in our simulations as follows. The CRs are uniformly placed in a circle of radius $R=150$m and protective region $R_0=5$m, centered around the PU with location $\boldsymbol{\ell}_{P}=[0, 0]^T$ for simplicity and without loss of generality. The PU transmit power $P_T$ is $20$dBm ($100$mW) which is the typical radiation power of IEEE 802.11b/g wireless LAN transmitters for $20$MHz channels in ISM bands. The measurement noise power $P_M$ is $-80$dBm ($10$pW) which is a moderate estimate of noise introduced in the DoA estimation process, compare to the $-100$dBm ($0.1$pW) thermal noise power of the 802.11 WLAN channel. The path-loss exponent and shadowing variance are $\gamma=5$ and $\sigma_s = 6$dB respectively. The power and channel settings result in an averaged and minimum received SNR of $10$dB and $-10$dB, respectively. Each CR is equipped with an ULA with $N_a = 2$ antennas, and uses $N_s=50$ samples for each localization period. The array orientation error with regard to the incoming DoA is assumed to follow uniform distribution given by $\widehat{\theta}_n \sim \mathcal{U} (- \pi/3, \pi/3)$. We use the parameter $\pi/3$ for the orientation error since the maximum orientation error is $\pi/2$ which gives infinite error variance as can be verified through (\ref{ErrVar_CRB_Sim}) or (\ref{ErrVar_Music}), thus $\pi/3$ is a practical value. Each data point in the figures is obtained from averages of 1000 CR placements (with random array orientation) and 2000 realization of RSS/DoA measurements if applicable. In the following subsections, we use these settings unless stated otherwise.

\subsection{Impact of Number of CRs}

We first study the impact of node density on the localization accuracy. The results for the RSS-only CRB (\ref{RMSE_Fix_RSS}) and WCL algorithm are presented in Fig.\ref{fig:VaryNode_RSS}. It is observed that adding more CRs provides steady performance improvement for both RSS-only CRB and WCL algorithm, and the WCL algorithm has a constant RMSE gap of about $10$m for sufficiently large number of CRs, say 40, compared with the RSS-only CRB. The results for the derived joint CRB using optimal DoA estimators (\ref{RMSE_Fix_Joint}) and the weighted Stansfield algorithm are presented in Fig.\ref{fig:VaryNode_Joint}. The localization accuracy of the joint CRB and algorithm vary from 0.1 meters to 5 meters, compared with the accuracy range of 5 to 35 meters for RSS-only CRB and WCL in Fig.\ref{fig:VaryNode_RSS}, indicating that using both RSS and DoA measurements provides much better accuracy than using only RSS. The predicted accuracy by the joint CRB is achievable by the weighted Stansfiled algorithm, for large number of CRs.

\begin{figure}
\centering{
\includegraphics[width=0.6\columnwidth]{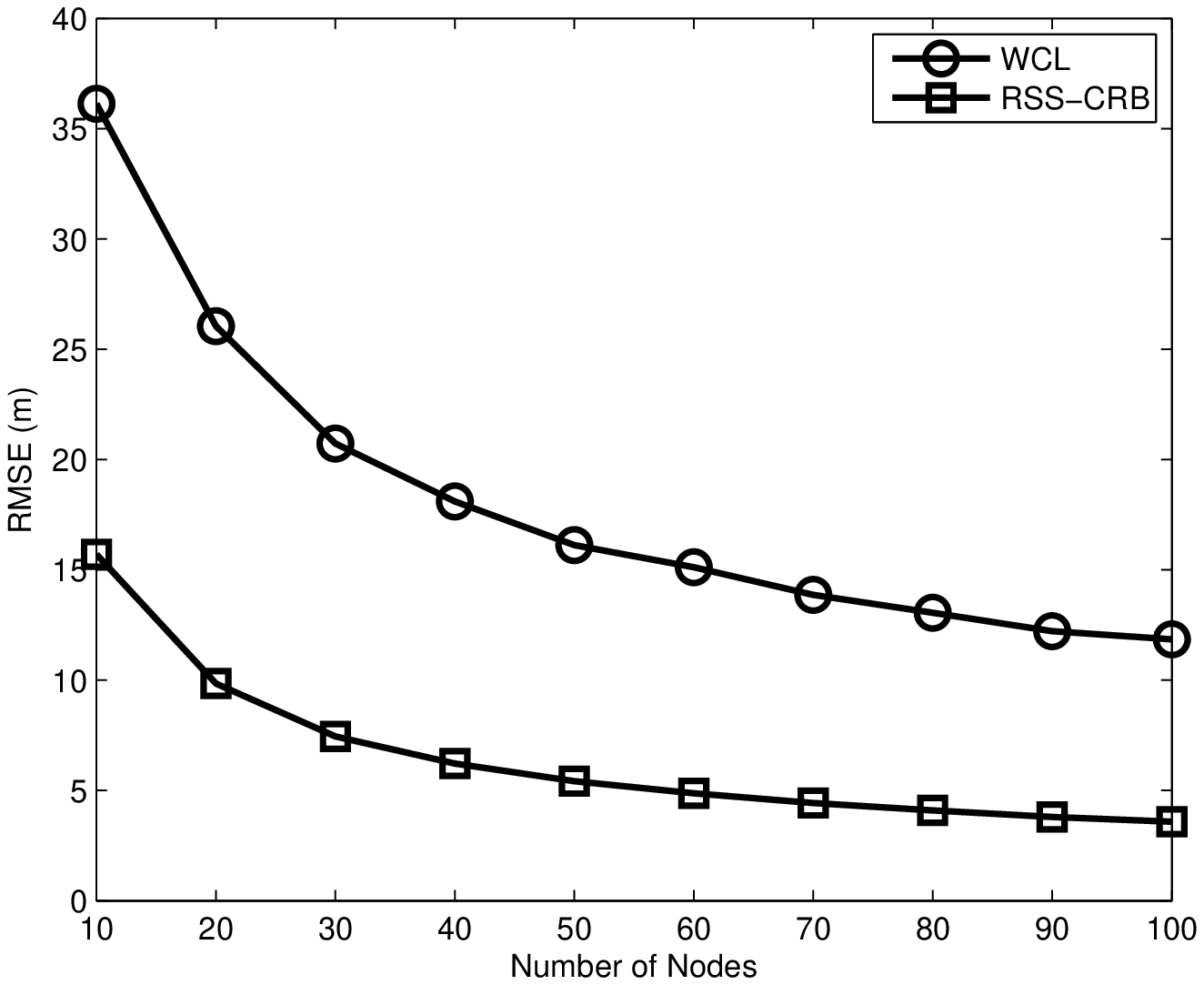}}
\caption{Results for RMSE of RSS-only CRB and WCL algorithm with varying number of CRs, with uniform random placement in a circle with $R=150$m.}
\label{fig:VaryNode_RSS}
\end{figure}
\begin{figure}
\centering{
\includegraphics[width=0.6\columnwidth]{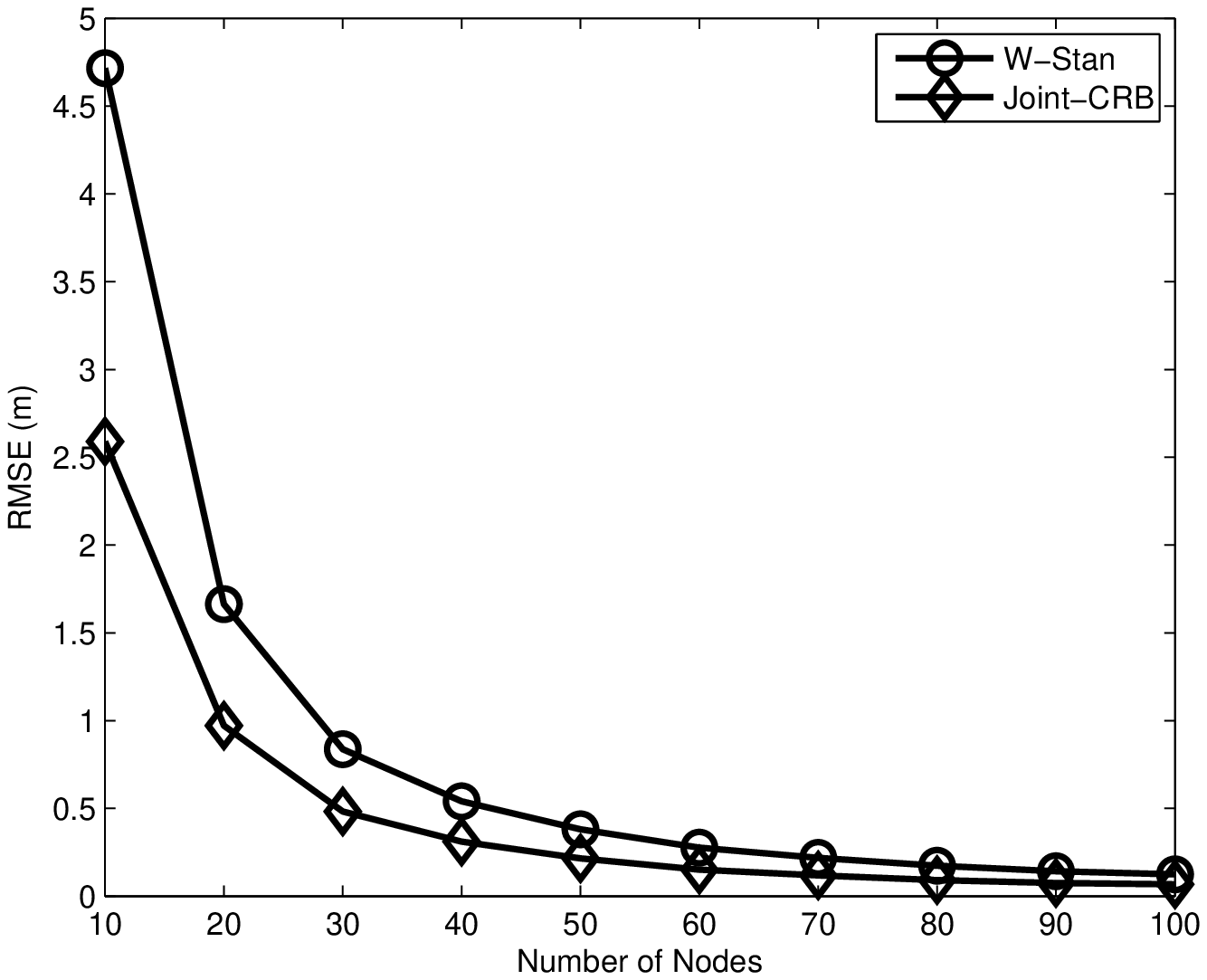}}
\caption{Results for RMSE of the joint CRB and weighted Stansfield algorithm with varying number of CRs, with uniform random placement in a circle with $R=150$m.}
\label{fig:VaryNode_Joint}
\end{figure}

\subsection{Impact of Channel Conditions}

We study the impact of channel conditions including shadowing variance and correlation distance on the localization accuracy. The considered values of shadowing variance vary from 4 to 10dB, and the typical correlation distance for mobile communication is shown to be within $30$ meters \cite{Taaghol1997}. The results for the RSS-only CRB and the WCL algorithms are shown in Fig.\ref{fig:VarySha_RSS}. It is observed that increasing shadowing standard derivation causes a linear increase in the RMSE, in that a 2dB increase of shadowing standard derivation results in 4 meters accuracy loss for both RSS-only CRB and the WCL algorithm. The performance loss caused by correlated shadowing is more significant for the RSS-only CRB than WCL. For example, when the correlation distance increases from 0 meter to 30 meters under 10dB shadowing, the RMSE rises 5 meters for the RSS-only CRB, however the loss is about 2 meters for WCL. Therefore, WCL is robust to correlated shadowing.

\begin{figure}
\centering{
\includegraphics[width=0.6\columnwidth]{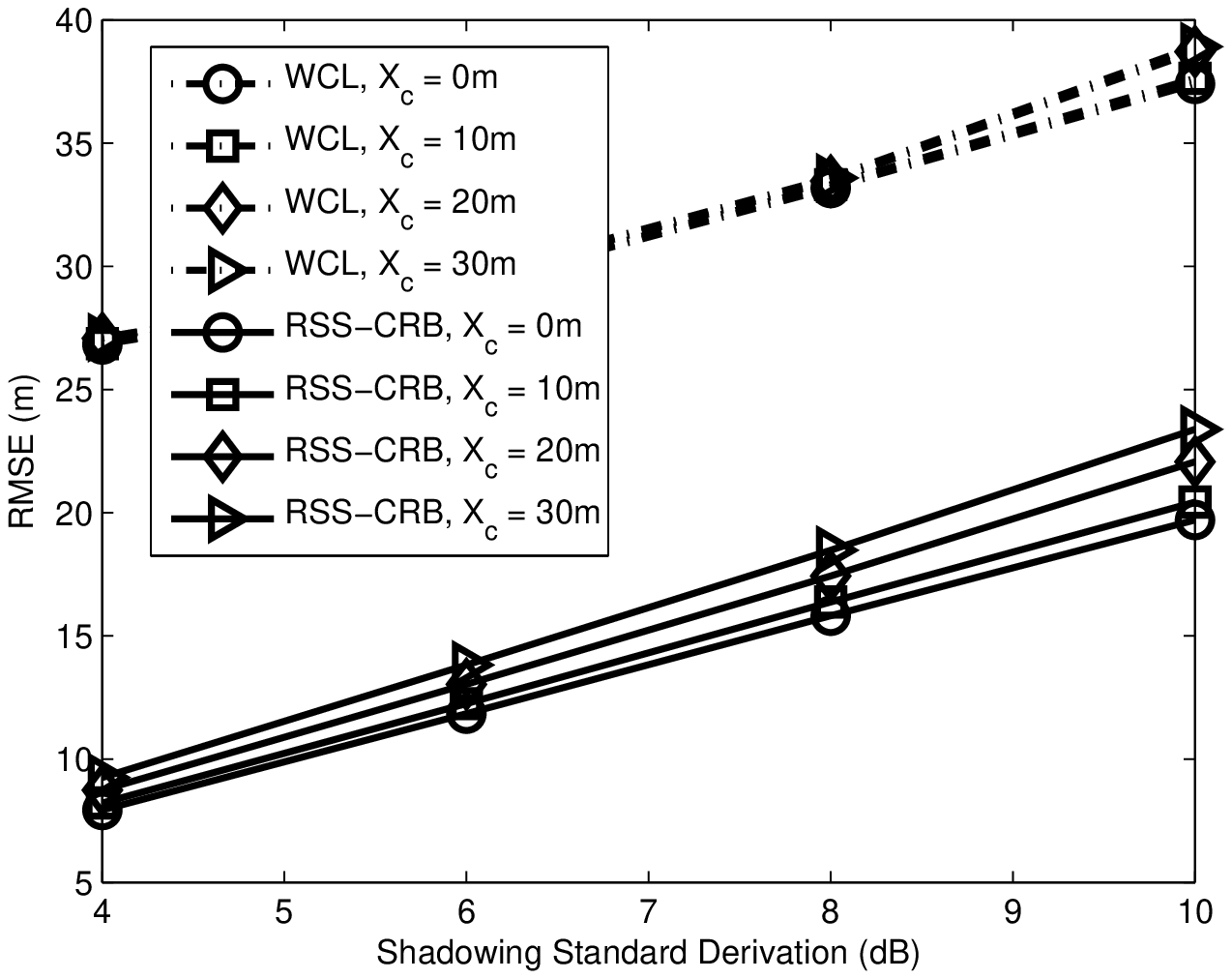}}
\caption{Results for RMSE of RSS-only CRB and WCL algorithm with varying shadowing standard derivation and correlation distance, with uniform random placement of 15 CRs.}
\label{fig:VarySha_RSS}
\end{figure}

The impact of shadowing standard derivation on the accuracy of derived joint CRB, the joint CRB of \cite{Penna2011} and the weighted Stansfield algorithm are presented in Fig.\ref{fig:VarySha_Joint}. Both results using optimal DoA estimator and the MUSIC algorithm are shown. We obtain a very surprising observation that when the shadowing standard derivation increases, the RMSEs of all considered cases decrease. This is because the RSS $\widehat{\psi}_n$ is a log-normal random variable, when the shadowing standard derivation increases, the bell-shape PDF of the log-normal variable will become more spread-out, resulting in higher probabilities for larger RSS and lower RSS observations. Since the DoA estimation error variances (\ref{ErrVar_CRB_Sim}) and (\ref{ErrVar_Music}) are inversely proportional to RSS, increasing shadowing results in more CRs with both good and bad DoA measurements. Therefore with proper weighting based on RSS information in the DoA fusion process, we may obtain better accuracy for large shadowing cases. The weighted Stansfield algorithm cannot achieve the joint CRB since it uses the estimated error variance for the weighting scheme; compared to the perfect knowledge of error variance obtained by the joint CRB. We can also observe that using MUSIC algorithm causes a 1 meter RMSE loss for weighted Stansfield algorithm, which is more significant compared to the joint CRB. Using MUSIC algorithm has small impact on the joint CRB, since the increase in DoA estimation error is mitigated by making use of the perfect knowledge of each DoA's quality. Comparing the derived joint CRB and the CRB of \cite{Penna2011} we conclude that the new CRB further reveals the potential achievable accuracy of the joint localization algorithms, therefore is more informative.

The impact of correlation distance on the accuracy of derived joint CRB and the weighted Stansfield algorithm are presented in Fig.\ref{fig:VaryCorr_Joint}. We observe that both the joint CRB and the weighted Stansfield algorithm are robust to correlated shadowing, in that the performance for joint CRB is almost constant for all $X_c$ values, and the RMSE loss for weighted Stansfiled algorithm is only $0.2$ meter from i.i.d shadowing to $X_c=30$m. The robustness of the joint CRB and weighted Stansfield algorithm to correlated shadowing is due to the fact that correlation only affect the RSS measurements which determine the variance of DoA measurements; however the DoAs at each CR are corrupted by independent noise. Therefore the correlated shadowing does not have significant impact on the DoA fusion, and the resulting RMSE.

\begin{figure}
\centering{
\includegraphics[width=0.6\columnwidth]{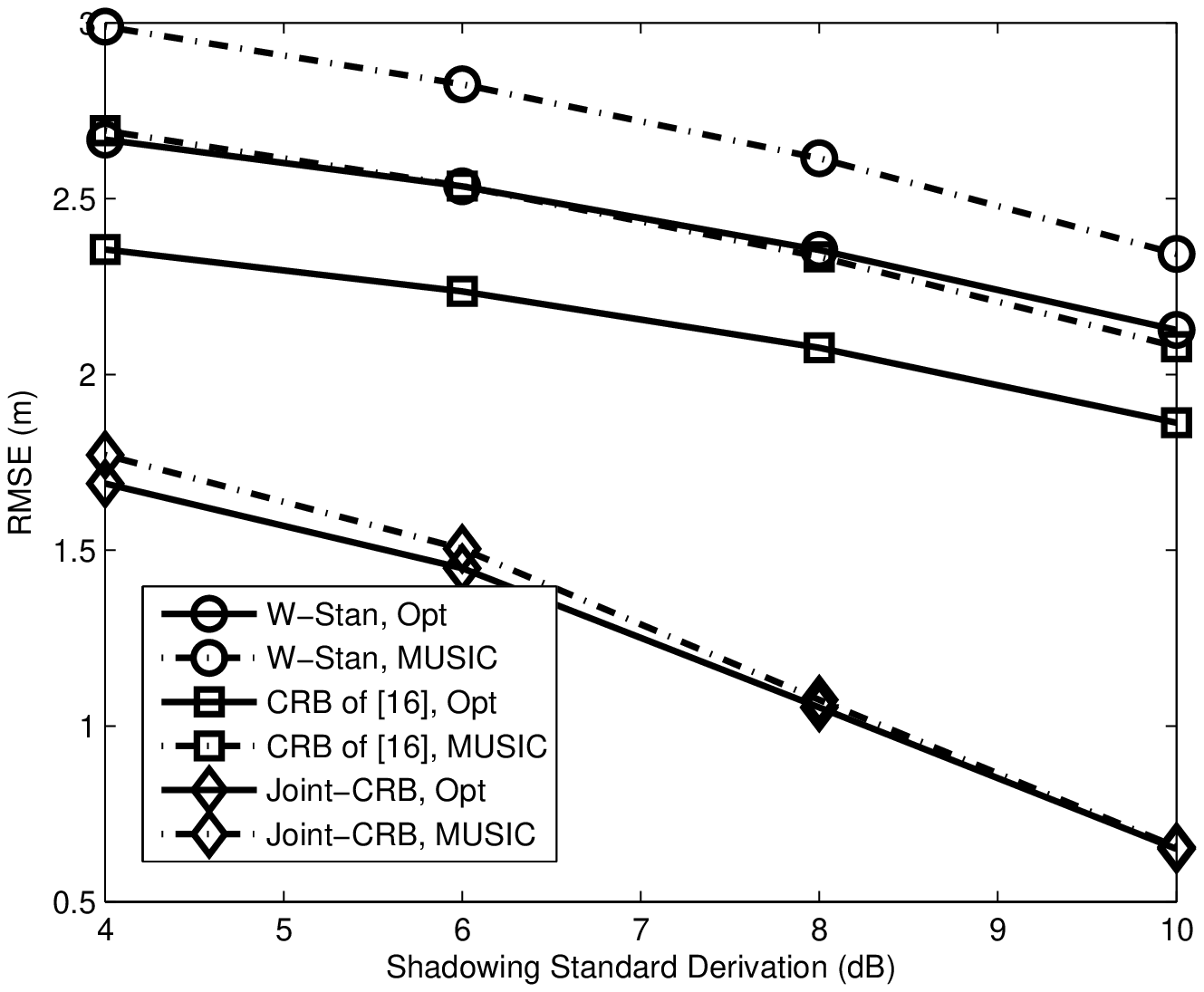}}
\caption{Results for RMSE of the joint CRB, CRB of \cite{Penna2011} and weighted Stansfield algorithm with varying shadowing standard derivation, with uniform random placement 15 CRs.}
\label{fig:VarySha_Joint}
\end{figure}
\begin{figure}
\centering{
\includegraphics[width=0.6\columnwidth]{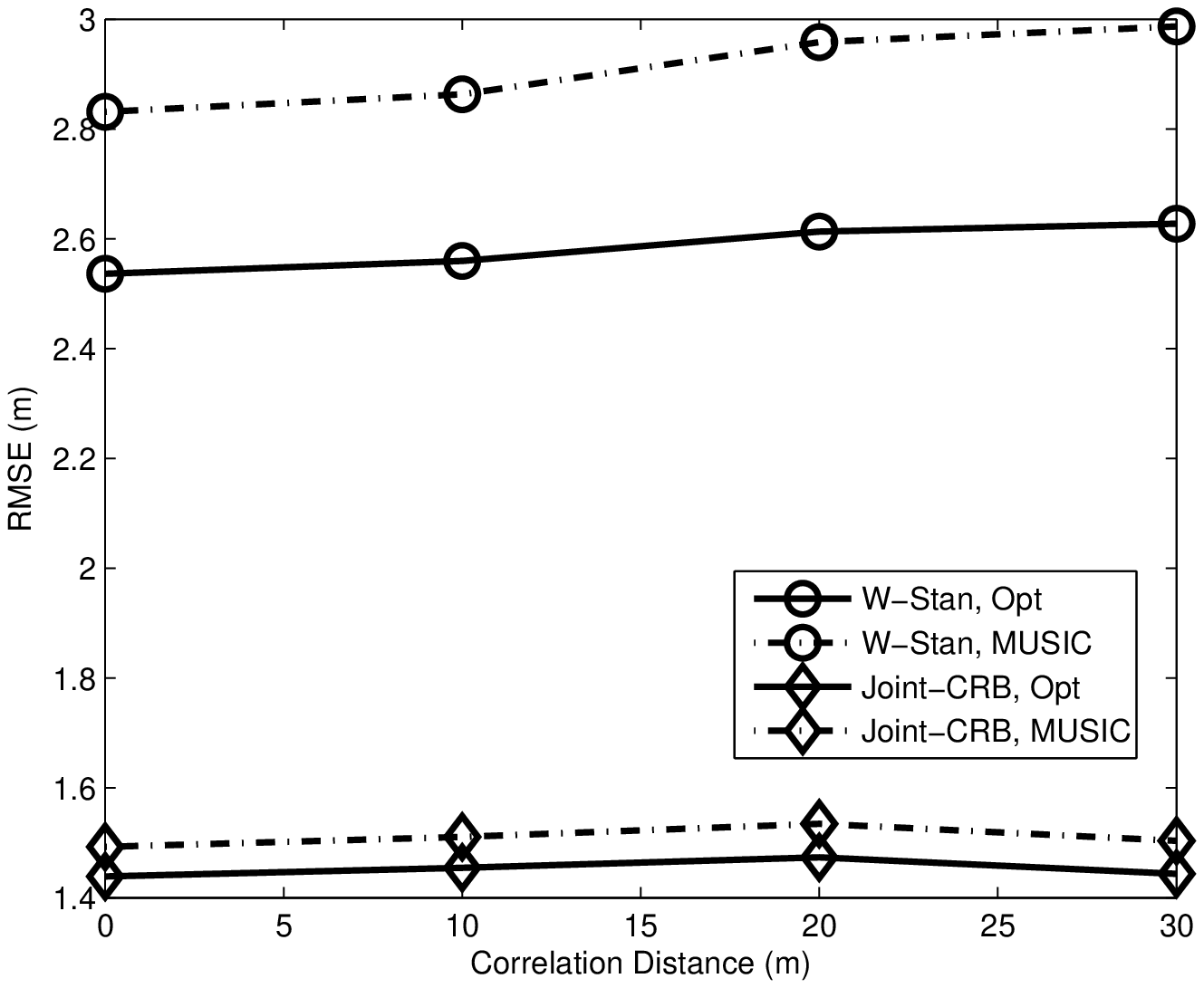}}
\caption{Results for RMSE of the joint CRB and weighted Stansfield algorithm with varying correlation distance, with uniform random placement 15 CRs.}
\label{fig:VaryCorr_Joint}
\end{figure}

\subsection{Impact of Array Parameters}

In this subsection we study the impact of various array parameters, including number of antennas, number of samples and array orientation error, on the localization performance. The comparison results for varying number of antennas are shown in Fig.\ref{fig:VaryNa}. It is observed that the benefit of adding number of antennas from 2 to 3 is about twice compared with 3 to 4; and the benefit continues to saturate for adding even more antennas. This indicates that in practical systems, using a small number of antennas at each node is good from both hardware cost and localization accuracy point of view.

\begin{figure}
\centering{
\includegraphics[width=0.6\columnwidth]{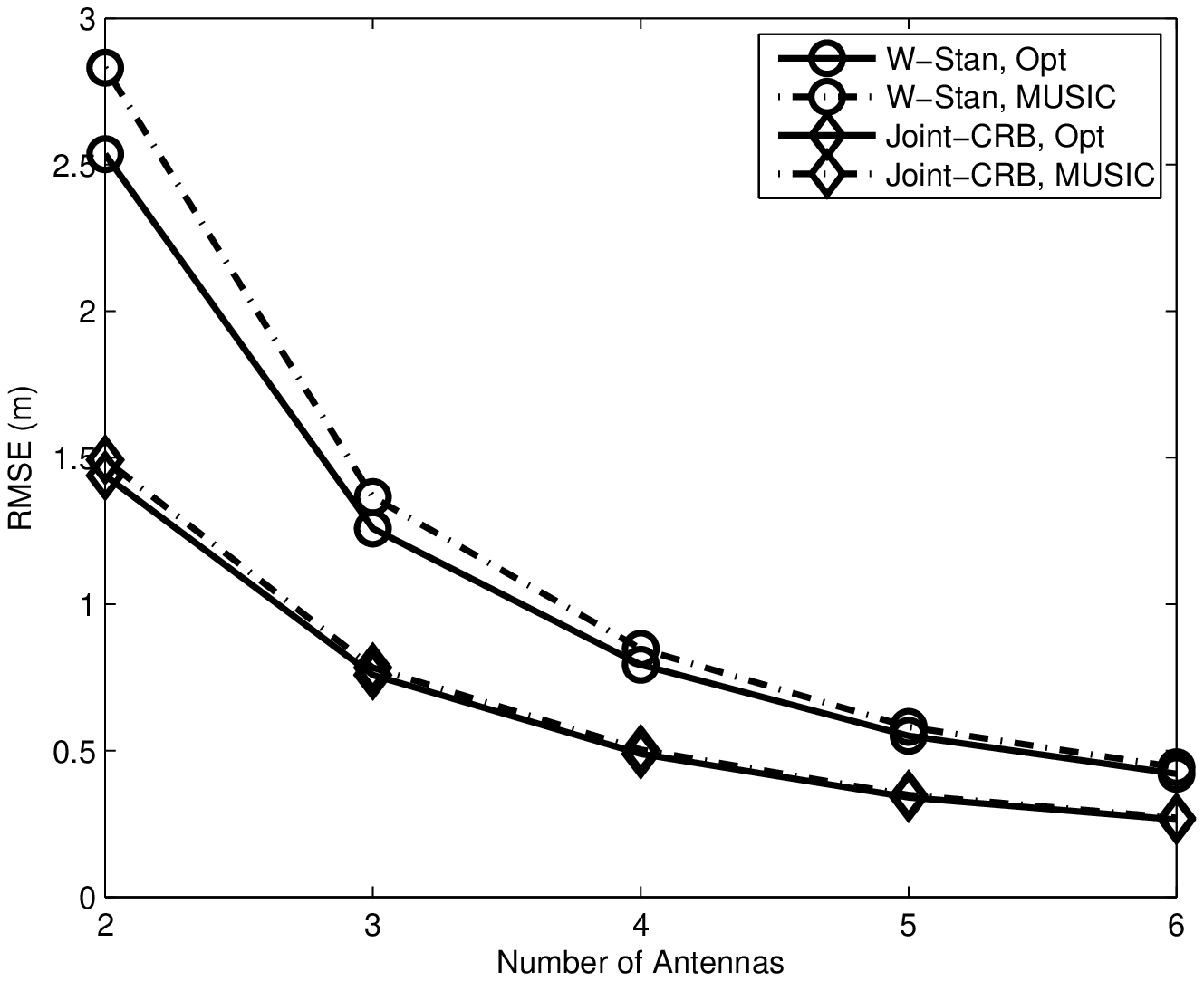}}
\caption{Results for RMSE of the joint CRB and weighted Stansfield algorithm with varying number of antennas, with uniform random placement 15 CRs. For this scenario, $N_s =50$.}
\label{fig:VaryNa}
\end{figure}

The number of samples used in practical systems is affected by several factors such as signal bandwidth, sampling period, spectrum sensing regulations etc, which are beyond the scope of this paper. Therefore we provide results for some typical values of $N_s$ to gain some insights on the impact of this parameter on the RMSE, with results shown in Fig.\ref{fig:VaryNs}. It is observed that the benefit of adding samples also tends to saturate when the number of samples is large enough, say $100$. Comparing the results with Fig.\ref{fig:VaryNa} we conclude that increasing number of antennas introduces more significant improvement on RMSE. For example, a combination of 3 antennas and 50 samples achieves RMSE of $0.75$ meters; however this accuracy is not achievable for 2 antennas, even through we reach 150 samples.

\begin{figure}
\centering{
\includegraphics[width=0.6\columnwidth]{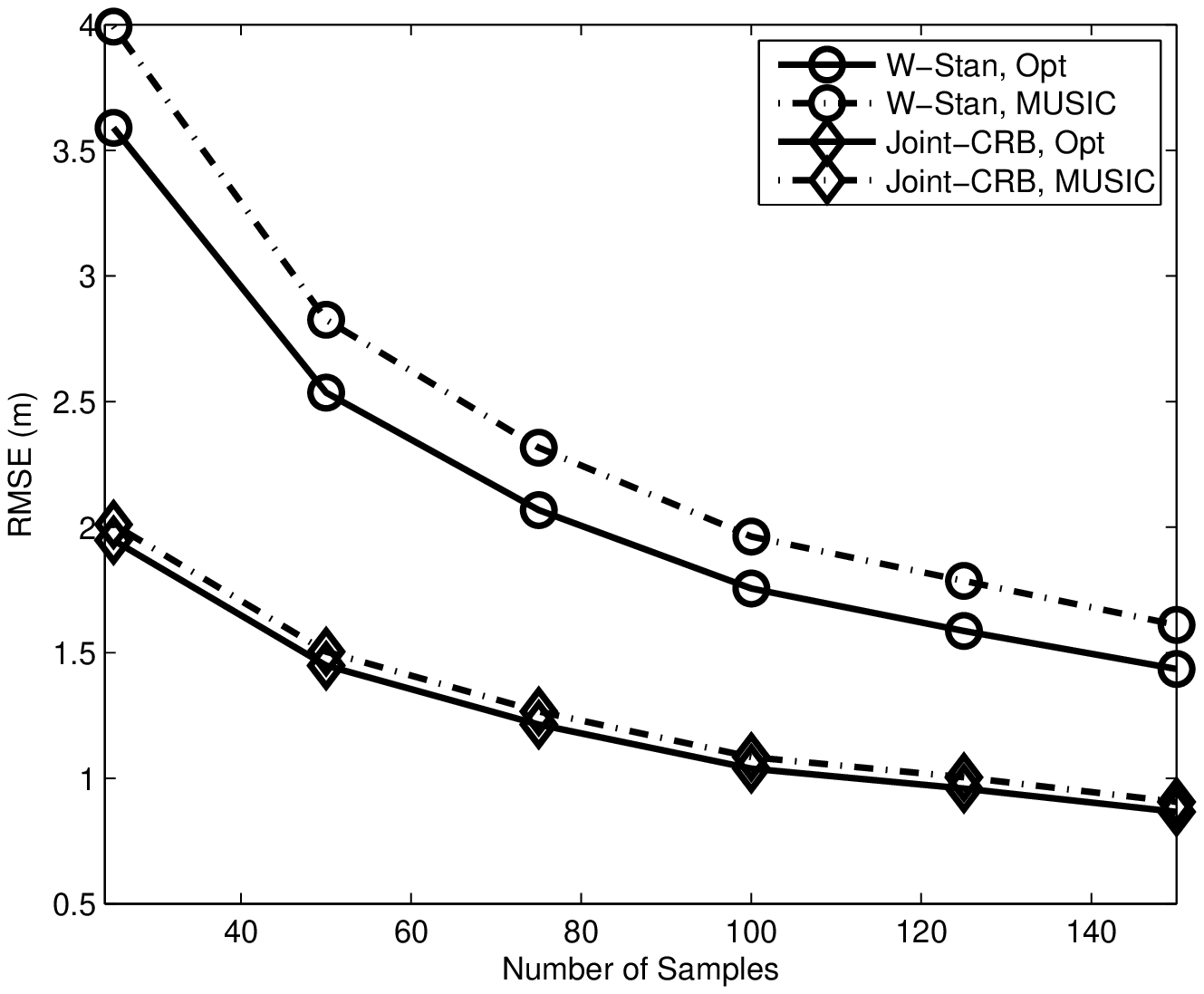}}
\caption{Results for RMSE of the joint CRB and weighted Stansfield algorithm with varying number of samples, with uniform random placement 15 CRs. For this scenario, $N_a=2$.}
\label{fig:VaryNs}
\end{figure}

The impact of array orientation error parameter $\theta_T$ is studied by results shown in Fig.\ref{fig:VaryOrt}. It is observed that the joint CRB is robust to $\theta_T$ as the RMSE increases only 0.3 meters from $\theta_T=0$ to $\theta_T = 5\pi/12$. However the weighted Stansfield algorithm can only tolerate small orientations errors, and the RMSE rises by 1 meter from $\theta_T = \pi/4$ to $\theta_T = 5\pi/12$ due to difficulties to accurately estimate DoA estimation error variances for large orientation errors.

\begin{figure}
\centering{
\includegraphics[width=0.6\columnwidth]{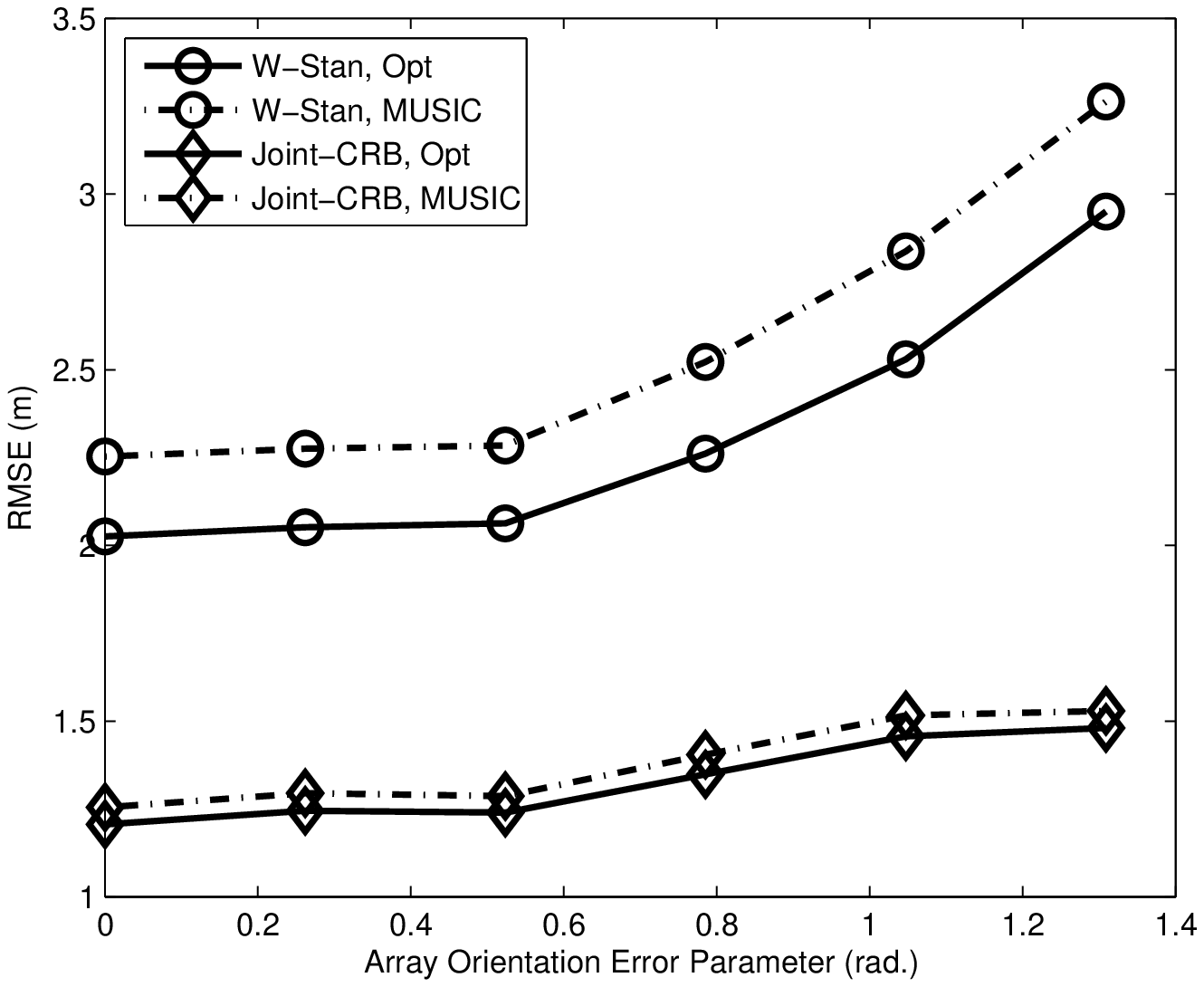}}
\caption{Results for RMSE of the joint CRB and weighted Stansfield algorithm with varying array orientation error parameter, with uniform random placement 15 CRs.}
\label{fig:VaryOrt}
\end{figure}

\subsection{Accuracy of Asymptotic CRBs}

We now evaluate the accuracy of the asymptotic CRBs for uniform random placement derived in Section IV compared with the exact CRBs conditioned on a specific placement derived in Section III. From (\ref{Required_N_RSS}) and (\ref{Required_N_Joint}) we observe that the required number of CRs to make the asymptotic CRB accurate is a function of various array and channel parameters. It is straightforward to numerically investigate impact of each system parameter on the required $N$. Among all involved parameters, we notice that the value of $R_0/R$ plays an important role in determining the required $N$. Thus we provide the comparison of exact CRBs after numerically averaging over uniform CR placements with the asymptotic CRBs for different $R_0$, with results shown in Fig.\ref{fig:Asym_RSS} and Fig.\ref{fig:Asym_Joint} for RSS-only bounds and joint bounds, respectively. For RSS-only bounds, the asymptotic CRB well approaches the exact bounds, in that the difference is less than 1 meter for $N=40$ when $R_0/R=0.1$, and the accuracy of the asymptotic bounds is almost the same as the exact bounds for $R_0/R=0.3$ and $R_0/R=0.5$. Note that the RMSE increases as we increase $R_0$ since we are excluding the nodes that are closer to the PU. On the contrary, since the variance of the asymptotic joint bound is larger than the asymptotic RSS-only bound, the approximation is less accurate. As we can observe from the figure, the approximation is satisfactory for $N>40$ when $R_0/R=0.3$m, and $N>30$ when $R_0/R=0.5$m. In practical systems, the circle with radius $R_0$ correspond to the 'forbidden region' around the PU where activities of secondary network will severely interfere the PU and be interfered by the PU, the value of $R_0$ depends on primary network requirements. However, if we consider the area ratio given by $R_0^2/R^2$, then both the asymptotic RSS-only and joint CRB provide satisfying accuracy for area ratio greater than $9 \%$, which is a very reasonable forbidden region setting.

We also evaluated the required number of CRs for asymptotic RSS-only and joint CRBs, predicted by (\ref{Required_N_RSS}) and (\ref{Required_N_Joint}) respectively, for varying $R_0/R$. The results are shown in Fig. \ref{fig:RequiredN}. Since it is not possible to directly relate the amount of deviation $\delta_0$ in Theorem \ref{RSS_FIM_Theorem} and \ref{Joint_FIM_Theorem} with the gap between the asymptotic CRBs and the exact CRBs, the choices of $\delta_0$ and $\eta$ become design parameters. As we have explained in Sec. \ref{sec:JCRB_Asym}, (\ref{Required_N_RSS}) and (\ref{Required_N_Joint}) tend to overestimate the required $N$, therefore we cannot select very small $\delta_0$ and $\eta$. In Fig. \ref{fig:RequiredN}, $\delta_0 = \left\| \frac{1}{N} \E \left[ \textbf{F}_{\widehat{\boldsymbol{\phi}}} \right] \right\|_F$ for RSS-only case, $\delta_0 = 2 \left\| \frac{1}{N} \E \left[ \textbf{F} \right] \right\|_F$ for the joint case and $\eta = 0.15$, which is the probability that the amount of deviation of the ensemble average is comparable to the statistical mean is 15\%. We use a larger constant for the joint case since the variance of $\textbf{F}$ is greater than that of $\textbf{F}_{\widehat{\boldsymbol{\phi}}}$, which results in a greater allowance on the deviation. Comparing the results in Fig. \ref{fig:RequiredN} with that in Fig. \ref{fig:Asym_RSS} and Fig. \ref{fig:Asym_Joint}, we observe that the predicted $N$ for RSS-only case is accurate, in that at predicted $N$, the asymptotic RSS-only CRB in Fig. \ref{fig:Asym_RSS} well-approaches the exact CRB; on the other hand, the predicted $N$ for the joint case is in general more than enough. For example, when the $R_0/R = 0.3$, the predicted $N$ is 90; however, result in Fig. \ref{fig:Asym_Joint} indicates using 60 nodes already provides satisfying approximation accuracy.

\begin{figure}
\centering{
\includegraphics[width=0.6\columnwidth]{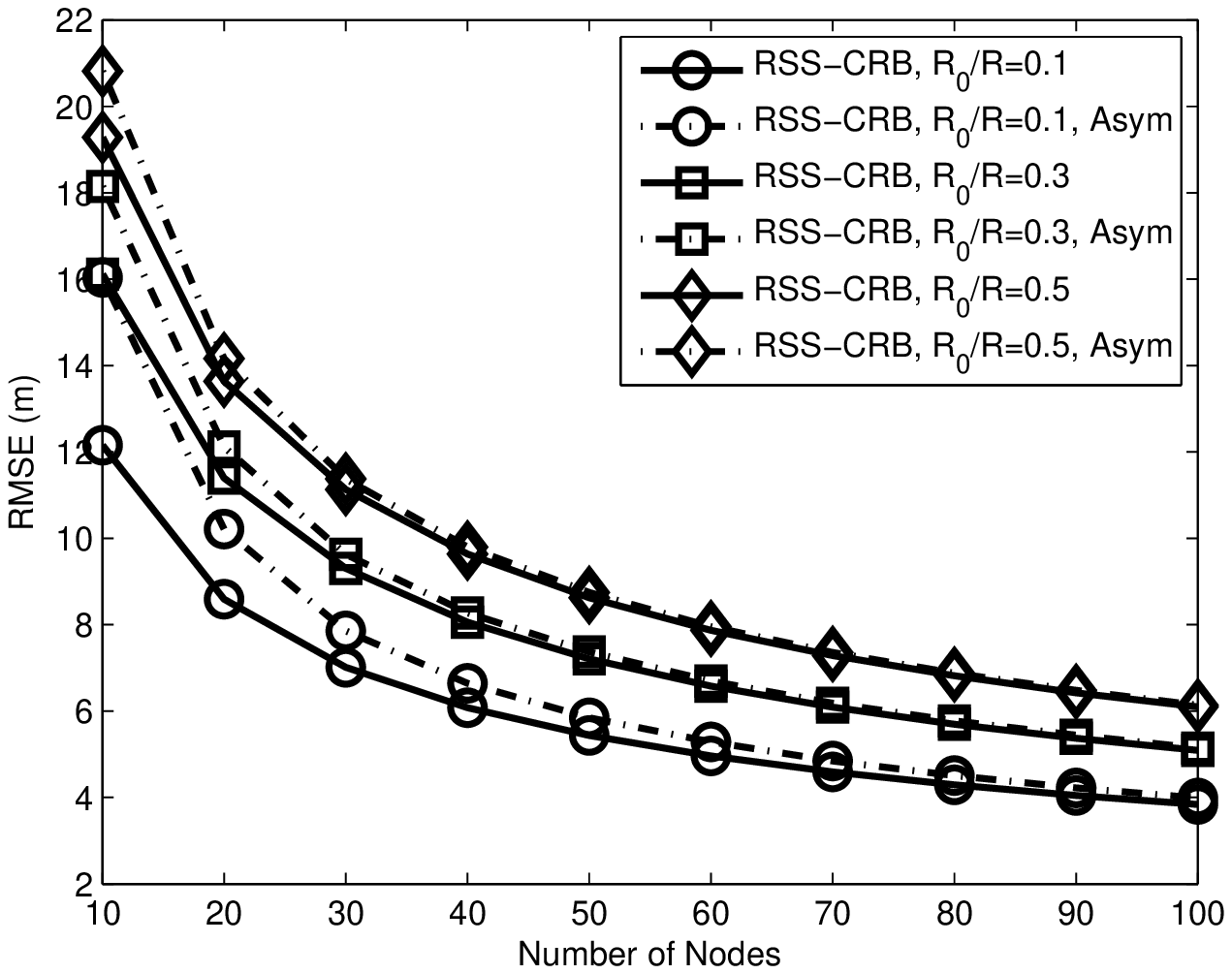}}
\caption{Results for exact and asymptotic RMSE of RSS-only CRB with varying number of CRs, for different $R_0$.}
\label{fig:Asym_RSS}
\end{figure}
\begin{figure}
\centering{
\includegraphics[width=0.6\columnwidth]{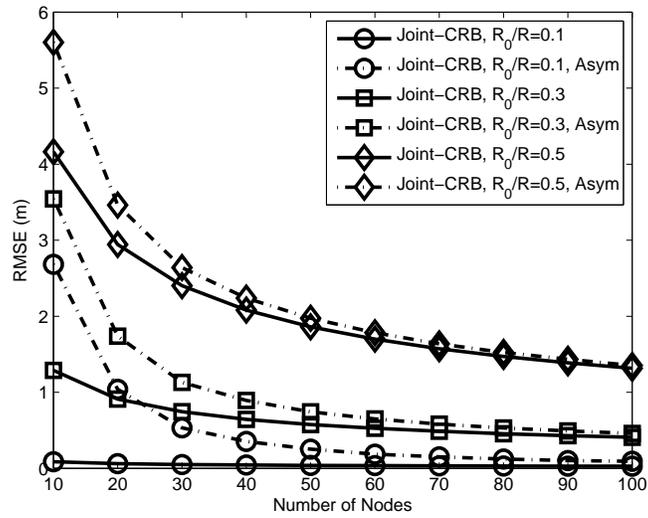}}
\caption{Results for exact and asymptotic RMSE of the joint CRB with varying number of CRs, for different $R_0$.}
\label{fig:Asym_Joint}
\end{figure}
\begin{figure}
\centering{
\includegraphics[width=0.6\columnwidth]{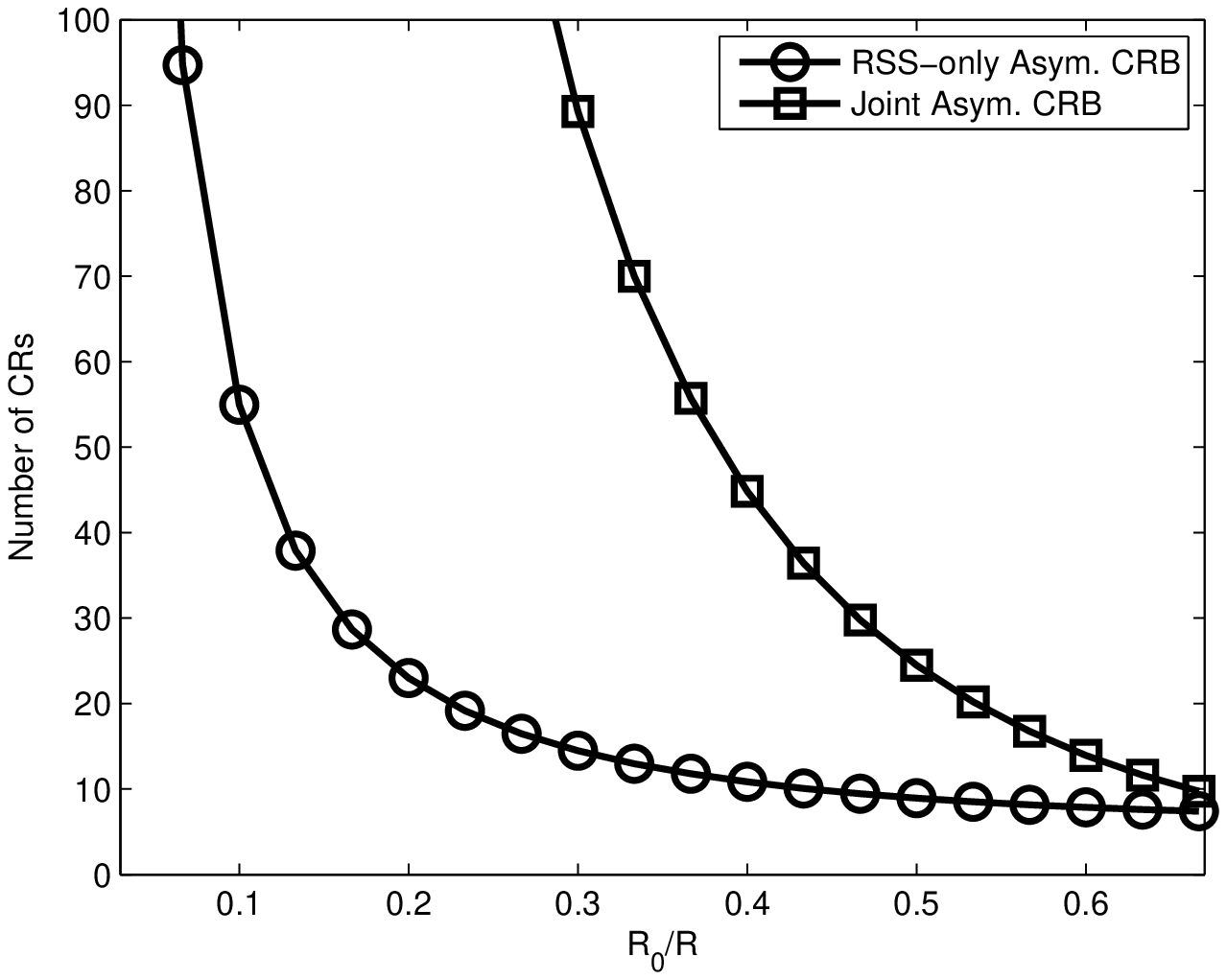}}
\caption{Required number of CRs predicted by the Chebyshev's inequality for the asymptotic CRBs to be accurate with varying $R_0$. For this scenario, $\delta_0 = \left\| \frac{1}{N} \E \left[ \textbf{F}_{\widehat{\boldsymbol{\phi}}} \right] \right\|_F$ for RSS-only case, $\delta_0 = 2 \left\| \frac{1}{N} \E \left[ \textbf{F} \right] \right\|_F$ for the joint case and $\eta = 0.15$.}
\label{fig:RequiredN}
\end{figure}
%

\section{Conclusion}
\label{sec:Conclusion}

In this paper we present a novel analytical framework to evaluate the achievable performance of joint RSS/DoA algorithms for localization of PU transmitters in CR networks, which explicitly considers the dependency of DoA measurement quality on instant RSS. We first derive the joint CRB for a fixed CR placement, for both optimal DoA measurement and the MUSIC algorithm, respectively. We then derive the asymptotic joint CRB for uniform CR placement and provide theoretical and numerical study of its approximation accuracy. Using the derived joint CRBs in conjunction with numerical simulations, the impact of various system, antenna array and channel variables on the joint CRB and several practical localization algorithms is investigated and quantified. Based on the above analysis, we provide guidelines in terms of number of nodes and antenna array settings required to achieve certain localization accuracy, which are useful for practical localization system design. It is observed that the achievable RMSE decreases for large shadowing standard derivation due to cooperative diversity, and is robust to correlated shadowing. With a small number of antennas, say 2 to 3, and sufficient number of samples, jointly using RSS and DoA measurements reduces the RMSE by $90 \%$ compared with using only RSS.

\appendices
\section{Derivation of $\textbf{F}_{\widehat{\boldsymbol{\phi}}}$ and $\textbf{F}_{\widehat{\boldsymbol{\theta}} | \widehat{\boldsymbol{\phi}}}$}
\label{Appendix1}

In this appendix we provide detailed derivations to obtain elements of $\textbf{F}_{\widehat{\boldsymbol{\phi}}}$ and $\textbf{F}_{\widehat{\boldsymbol{\theta}} | \widehat{\boldsymbol{\phi}}}$. For $\textbf{F}_{\widehat{\boldsymbol{\phi}}}$, starting from (\ref{RSS_FIM}), each element in $\textbf{F}_{\widehat{\boldsymbol{\phi}}}$ is given as
\begin{equation}\label{RSS_FIM_Expand1}
    \left\{ \textbf{F}_{\widehat{\boldsymbol{\phi}}} \right\}_{11} = \E_{\widehat{\boldsymbol{\phi}}} \left[ \frac{\partial (\widehat{\boldsymbol{\phi}} - \overline{\boldsymbol{\phi}})^{T}}{\partial x_P}  \boldsymbol{\Omega}_{\textbf{s}}^{-1} \frac{\partial (\widehat{\boldsymbol{\phi}} - \overline{\boldsymbol{\phi}})}{\partial x_P} + (\widehat{\boldsymbol{\phi}} - \overline{\boldsymbol{\phi}}) \boldsymbol{\Omega}_{\textbf{s}}^{-1} \frac{\partial^2 (\widehat{\boldsymbol{\phi}} - \overline{\boldsymbol{\phi}})}{\partial x_P^2}\right]
\end{equation}
\begin{equation}\label{RSS_FIM_Expand2}
\left\{ \textbf{F}_{\widehat{\boldsymbol{\phi}}} \right\}_{22} = \E_{\widehat{\boldsymbol{\phi}}} \left[ \frac{\partial (\widehat{\boldsymbol{\phi}} - \overline{\boldsymbol{\phi}})^{T}}{\partial y_P}  \boldsymbol{\Omega}_{\textbf{s}}^{-1} \frac{\partial (\widehat{\boldsymbol{\phi}} - \overline{\boldsymbol{\phi}})}{\partial y_P} + (\widehat{\boldsymbol{\phi}} - \overline{\boldsymbol{\phi}}) \boldsymbol{\Omega}_{\textbf{s}}^{-1} \frac{\partial^2 (\widehat{\boldsymbol{\phi}} - \overline{\boldsymbol{\phi}})}{\partial y_P^2}\right]
\end{equation}
\begin{equation}\label{RSS_FIM_Expand3}
\left\{ \textbf{F}_{\widehat{\boldsymbol{\phi}}} \right\}_{12} = \left\{ \textbf{F}_{\widehat{\boldsymbol{\phi}}} \right\}_{21} = \E_{\widehat{\boldsymbol{\phi}}} \left[ \frac{\partial (\widehat{\boldsymbol{\phi}} - \overline{\boldsymbol{\phi}})^{T}}{\partial x_P}  \boldsymbol{\Omega}_{\textbf{s}}^{-1} \frac{\partial (\widehat{\boldsymbol{\phi}} - \overline{\boldsymbol{\phi}})}{\partial y_P} + (\widehat{\boldsymbol{\phi}} - \overline{\boldsymbol{\phi}}) \boldsymbol{\Omega}_{\textbf{s}}^{-1} \frac{\partial^2 (\widehat{\boldsymbol{\phi}} - \overline{\boldsymbol{\phi}})}{\partial x_P \partial y_P}\right],
\end{equation}
where elements for all vectors of derivatives are given by $\frac{\partial (\widehat{\boldsymbol{\phi}}_n - \overline{\boldsymbol{\phi}}_n)}{\partial x_P} = \frac{10 \gamma}{\log 10} \frac{\Delta x_n}{d_n^2}$, $\frac{\partial (\widehat{\boldsymbol{\phi}}_n - \overline{\boldsymbol{\phi}}_n)}{\partial y_P} = \frac{10 \gamma}{\log 10} \frac{\Delta y_n}{d_n^2}$, $\frac{\partial^2 (\widehat{\boldsymbol{\phi}}_n - \overline{\boldsymbol{\phi}}_n)}{\partial x_P^2} = \frac{10 \gamma}{\log 10} \frac{\Delta y_n^2 - \Delta x_n^2}{d_n^4}$, $\frac{\partial^2 (\widehat{\boldsymbol{\phi}}_n - \overline{\boldsymbol{\phi}}_n)}{\partial y_P^2} = \frac{10 \gamma}{\log 10} \frac{\Delta x_n^2 - \Delta y_n^2}{d_n^4}$, $\frac{\partial^2 (\widehat{\boldsymbol{\phi}}_n - \overline{\boldsymbol{\phi}}_n)}{\partial x_P \partial x_P} = - \frac{20 \gamma}{\log 10} \frac{\Delta x_n \Delta y_n}{d_n^4}$. Applying the expectation in (\ref{RSS_FIM_Expand1}-\ref{RSS_FIM_Expand3}) with the derivatives results in (\ref{RSS_FIM_Elements}).

For $\textbf{F}_{\widehat{\boldsymbol{\theta}} | \widehat{\boldsymbol{\phi}}}$, starting from (\ref{DoA_FIM}), the second order derivatives of $g_{n}$ over $x_P$ is given by
\begin{eqnarray}\label{g1_2nd_Derivatives1}
  \frac{\partial^2 g_{n}}{\partial x_P^2} &=& - \frac{\Delta y_n}{\beta f(\widehat{\phi}_n) d_n^4} \left\{ \left[ \Delta x_n \sin (2 \widetilde{\theta}_n) + \Delta y_n \cos (2 \widetilde{\theta}_n) \right] (\widehat{\theta}_n - \theta_n)^2 \right. \nonumber \\
&& \left. + 2 \left[ \Delta x_n \cos^2 \widetilde{\theta}_n - \Delta y_n \sin (2 \widetilde{\theta}_n) \right] (\widehat{\theta}_n - \theta_n) - \Delta y_n \cos^2 \widetilde{\theta}_n \right\}
\end{eqnarray}
Note that here we use the general form $f(\widehat{\phi}_n)$ to represent the function of RSS in the DoA estimation error variance, instead of using specific functions such as $f_{CRB} (\widehat{\phi}_n)$ or $f_{MU} (\widehat{\phi}_n)$. Therefore the results in this appendix can be applied to derivation of $\textbf{F}_{\widehat{\boldsymbol{\theta}} | \widehat{\boldsymbol{\phi}}}$ of both assuming optimal DoA estimation or using MUSIC algorithm. When we take expectations of (\ref{g1_2nd_Derivatives1}), we apply the following property of taking expectation of a function $f(x,y)$ of two random variables $X$ and $Y$ with known joint PDF $p(x,y)$: $\E_{X,Y} \left[ f(x,y) \right] = \E_{Y} \left\{ \E_{X|Y} \left[ f(x,y) \right] \right\}$, and get
\begin{equation}\label{g1_Exp_Part1}
    \E_{\widehat{\boldsymbol{\theta}}, \widehat{\boldsymbol{\phi}}} \left[ \frac{\partial^2 g_{n}}{\partial x_P^2} \right] = \E_{\widehat{\boldsymbol{\phi}}} \left\{ \E_{\widehat{\boldsymbol{\theta}}|\widehat{\boldsymbol{\phi}}} \left[ \frac{\partial^2 g_{n}}{\partial x_P^2} \right] \right\} = \frac{\Delta y_n^2 \cos^2 \widetilde{\theta}_n}{\beta d_n^4} \E_{\widehat{\boldsymbol{\phi}}} \left[ \frac{1}{f(\widehat{\phi}_n)} \right] - \frac{\Delta y_n \left[ \Delta x_n \sin (2 \widetilde{\theta}_n) + \Delta y_n \cos (2 \widetilde{\theta}_n) \right]}{d_n^4 \cos^2 \widetilde{\theta}_n},
\end{equation}
Applying similar procedure to $\frac{\partial^2 g_{n}}{\partial x_P \partial y_P}$ and $\frac{\partial^2 g_{n}}{\partial y_P^2}$, we obtain
\begin{eqnarray}\label{g1_Exp_Part2}
\E_{\widehat{\boldsymbol{\theta}}, \widehat{\boldsymbol{\phi}}} \left[ \frac{\partial^2 g_{n}}{\partial x_P \partial y_P} \right] &=& - \frac{\Delta x_n \Delta y_n \cos^2 \widetilde{\theta}_n}{\beta d_n^4} \E_{\widehat{\boldsymbol{\phi}}} \left[ \frac{1}{f(\widehat{\phi}_n)} \right] + \frac{\left[ \Delta x_n \Delta y_n \cos (2 \widetilde{\theta}_n) + \frac{1}{2} (\Delta x_n^2 - \Delta y_n^2) \sin (2 \widetilde{\theta}_n) \right]}{d_n^4 \cos^2 \widetilde{\theta}_n} \nonumber \\
\E_{\widehat{\boldsymbol{\theta}}, \widehat{\boldsymbol{\phi}}} \left[ \frac{\partial^2 g_{n}}{\partial y_P^2} \right] &=& \frac{\Delta x_n^2 \cos^2 \widetilde{\theta}_n}{\beta d_n^4} \E_{\widehat{\boldsymbol{\phi}}} \left[ \frac{1}{f(\widehat{\phi}_n)} \right] - \frac{\Delta x_n \left[ \Delta x_n \cos (2 \widetilde{\theta}_n) - \Delta y_n \sin (2 \widetilde{\theta}_n) \right]}{d_n^4 \cos^2 \widetilde{\theta}_n}.
\end{eqnarray}
Now we consider different $f(\widehat{\phi}_n)$ given by CRB or MUSIC, which results in
\begin{subnumcases}{\E_{\widehat{\boldsymbol{\phi}}} \left[ \frac{1}{f(\widehat{\phi}_n)} \right]=}
    \frac{c_0 P_T e^{\sigma_s^2/(2 \epsilon)}}{d_n^{\gamma}}, & CRB, \label{Func_RSS_CRB} \\
    \frac{c_0 P_T e^{\sigma_s^2/(2 \epsilon)}}{d_n^{\gamma}} - \frac{P_M}{N_a} + \frac{P_M^2}{N_a^2} \E_{\widehat{\boldsymbol{\phi}}} \left[ \frac{1}{\widehat{\psi}_n + \frac{P_M}{N_a}} \right], & MUSIC,
\label{Func_RSS_MUSIC}
\end{subnumcases}
where the expectation in (\ref{Func_RSS_MUSIC}) is an expectation of inverse of shifted log-normal variable, which is not directly obtainable and need to be calculated through numerical methods.

The second order derivatives of $h_{n}$ over the PU coordinates are given by
\begin{eqnarray}\label{g2_2nd_Derivatives}
    \frac{\partial^2 h_{n}}{\partial x_P^2} &=& - \frac{\Delta y_n}{d_n^4 \cos^2 \widetilde{\theta}_n} \left[ \Delta y_n + \Delta x_n \sin (2 \widetilde{\theta}_n) \right] = \E_{\widehat{\boldsymbol{\theta}}, \widehat{\boldsymbol{\phi}}} \left[ \frac{\partial^2 h_{n}}{\partial x_P^2} \right] \nonumber \\
    \frac{\partial^2 h_{n}}{\partial y_P^2} &=& - \frac{\Delta x_n}{d_n^4 \cos^2 \widetilde{\theta}_n} \left[ \Delta x_n - \Delta y_n \sin (2 \widetilde{\theta}_n) \right] = \E_{\widehat{\boldsymbol{\theta}}, \widehat{\boldsymbol{\phi}}} \left[ \frac{\partial^2 h_{n}}{\partial y_P^2} \right] \nonumber \\
    \frac{\partial^2 h_{n}}{\partial x_P \partial y_P} &=& \frac{1}{d_n^4 \cos^2 \widetilde{\theta}_n} \left[ \Delta x_n \Delta y_n + \frac{1}{2} (\Delta x_n^2 - \Delta y_n^2) \sin (2 \widetilde{\theta}_n) \right] = \E_{\widehat{\boldsymbol{\theta}}, \widehat{\boldsymbol{\phi}}} \left[ \frac{\partial^2 h_{n}}{\partial x_P \partial y_P} \right],
\end{eqnarray}
where the second equality holds because the second order derivatives of $h_{n}$ are not related to random variables $\widehat{\boldsymbol{\theta}}$ or $\widehat{\boldsymbol{\phi}}$. Summation of corresponding elements in (\ref{g1_Exp_Part1}), (\ref{g1_Exp_Part2}) and (\ref{g2_2nd_Derivatives}) with some simplification will give us the elements of sub-FIM $\textbf{F}_{\widehat{\boldsymbol{\theta}} | \widehat{\boldsymbol{\phi}}}$.


\section{Proof of Theorem \ref{RSS_FIM_Theorem} and \ref{Joint_FIM_Theorem}}
\label{Appendix3}

We start by introducing the Chebyshev's inequality for random matrix, which will be the basic tool for the following proof.

\begin{theorem}
\emph{(Chebyshev's Inequality for Random Matrix)}
\label{Chebyshev}
Suppose $\textbf{X}$ is a random matrix with mean $\E \left[ \textbf{X} \right]$ and its second order moments exist. Then for any $\delta > 0$, we have
\begin{equation}\label{Chebyshev_Ineq}
    \text{Pr} \left\{ \left\| \textbf{X} - \E \left[ \textbf{X} \right] \right\| > \delta \right\} < \frac{\E \left[ \left\| \textbf{X} - \E \left[ \textbf{X} \right] \right\|^2 \right]}{\delta^2},
\end{equation}
where $\left\| \cdot \right\|$ denotes any matrix norm.
\end{theorem}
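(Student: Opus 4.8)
The plan is to reduce this matrix statement to the classical scalar Chebyshev (equivalently Markov) inequality. First I would note that for any fixed matrix norm $\left\| \cdot \right\|$ the quantity $Y \triangleq \left\| \textbf{X} - \E\left[\textbf{X}\right] \right\|$ is a nonnegative real-valued random variable: it is measurable because every matrix norm is a continuous function of the matrix entries and $\E\left[\textbf{X}\right]$ is a deterministic matrix. By the hypothesis that the second-order moments of $\textbf{X}$ exist, together with the equivalence of all matrix norms on a finite-dimensional space, $\E\left[Y^2\right] = \E\left[\left\| \textbf{X} - \E\left[\textbf{X}\right] \right\|^2\right]$ is finite, so the right-hand side of (\ref{Chebyshev_Ineq}) is well defined.

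Next I would apply Markov's inequality to the nonnegative random variable $Y^2$. For any $\delta > 0$ the event $\left\{ Y > \delta \right\}$ coincides with $\left\{ Y^2 > \delta^2 \right\}$, and on this event $Y^2 > \delta^2$ strictly. Multiplying through by the indicator of $\left\{ Y > \delta \right\}$ and taking expectations gives $\delta^2 \, \text{Pr}\left\{ Y > \delta \right\} \le \E\left[ Y^2 \mathbf{1}_{\{Y > \delta\}} \right] \le \E\left[ Y^2 \right]$, where the first inequality is strict whenever $\text{Pr}\left\{ Y > \delta \right\} > 0$; when $\text{Pr}\left\{ Y > \delta \right\} = 0$ the bound reduces to $0 \le \E\left[Y^2\right]/\delta^2$, which holds as well. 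Rearranging and substituting back $Y = \left\| \textbf{X} - \E\left[\textbf{X}\right] \right\|$ yields exactly (\ref{Chebyshev_Ineq}).

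There is essentially no hard step here: the only points to verify carefully are that passing from the matrix $\textbf{X}$ to the scalar $\left\| \textbf{X} - \E\left[\textbf{X}\right] \right\|$ preserves measurability and finiteness of second moments, after which the conclusion is immediate from the scalar Markov inequality. The argument never invokes any property of the matrix norm beyond nonnegativity and measurability, which is precisely why the bound holds for an arbitrary norm; in the subsequent application to Theorems \ref{RSS_FIM_Theorem} and \ref{Joint_FIM_Theorem} we will specialize $\left\| \cdot \right\|$ to the Frobenius norm, for which $\E\left[ \left\| \textbf{X} - \E\left[\textbf{X}\right] \right\|^2 \right]$ decomposes into a sum of entrywise variances and can therefore be evaluated in closed form.
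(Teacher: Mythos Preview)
Your proposal is correct and follows essentially the same route as the paper: both reduce the matrix statement to the scalar Markov inequality applied to $\left\| \textbf{X} - \E\left[\textbf{X}\right] \right\|^2$, then use that $\left\{ \left\| \textbf{X} - \E\left[\textbf{X}\right] \right\| > \delta \right\} = \left\{ \left\| \textbf{X} - \E\left[\textbf{X}\right] \right\|^2 > \delta^2 \right\}$. Your version is slightly more explicit about measurability and about when the inequality is strict, but the underlying argument is identical.
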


\begin{proof}
From the Markov's inequality which states that if $X$ is a nonnegative random variable, and $\E \left[ X \right]$ is its mean, then for any $\delta > 0$, we have $\text{Pr} \{ X > \delta \} < \frac{\E \left[ X \right]}{\delta}$. Therefore, for scalar random variable $\left\| \textbf{X} - \E \left[ \textbf{X} \right] \right\|^2$, we have $\text{Pr} \left\{ \left\| \textbf{X} - \E \left[ \textbf{X} \right] \right\|^2 > \delta^2 \right\} < \frac{\E \left[ \left\| \textbf{X} - \E \left[ \textbf{X} \right] \right\|^2 \right]}{\delta^2}$. Taking square root of the expression on the left-hand-side finishes the proof.
\end{proof}

To prove Theorem \ref{RSS_FIM_Theorem}, we observe from the Chebyshev's inequality that we need to compute $\E \left[ \left\| \textbf{X} - \E \left[ \textbf{X} \right] \right\|^2_F \right]$ for $\textbf{X} = \frac{1}{N} \textbf{F}_{\widehat{\boldsymbol{\phi}}}$, where the Frobenius norm is given by $\parallel \textbf{X} \parallel_{F} = \sqrt {\text{Tr} (\textbf{X}^T \textbf{X})}$. As a result, we need to evaluate the following expression
\begin{equation}\label{RSS_FIM_Evaluate}
  \E \left\{ \text{Tr} \left[ \left( \frac{1}{N} \textbf{F}_{\widehat{\boldsymbol{\phi}}} - \frac{1}{N} \E \left[ \textbf{F}_{\widehat{\boldsymbol{\phi}}} \right] \right)^T \left( \frac{1}{N} \textbf{F}_{\widehat{\boldsymbol{\phi}}} - \frac{1}{N} \E \left[ \textbf{F}_{\widehat{\boldsymbol{\phi}}} \right] \right) \right] \right\} = \frac{1}{N^2} \text{Tr} \left\{ \E \left[ \left( \textbf{F}_{\widehat{\boldsymbol{\phi}}} - \E \textbf{F}_{\widehat{\boldsymbol{\phi}}} \right)^T \left( \textbf{F}_{\widehat{\boldsymbol{\phi}}} - \E \textbf{F}_{\widehat{\boldsymbol{\phi}}} \right) \right] \right\}.
\end{equation}
The covariance of $\textbf{F}_{\widehat{\boldsymbol{\phi}}}$ appeared in the right-hand-side of (\ref{RSS_FIM_Evaluate}) can be derived from (\ref{RSS_FIM_Rewrite}) as
\begin{equation}\label{RSS_FIM_Covariance}
  \E \left[ \left( \textbf{F}_{\widehat{\boldsymbol{\phi}}} - \E \left[ \textbf{F}_{\widehat{\boldsymbol{\phi}}} \right] \right)^T \left( \textbf{F}_{\widehat{\boldsymbol{\phi}}} - \E \left[ \textbf{F}_{\widehat{\boldsymbol{\phi}}} \right] \right) \right] = \frac{\epsilon^2 \gamma^4 N}{\sigma_s^4} \left[ \frac{1}{2 R^2 R_0^2} - \frac{\log^2(R/R_0)}{(R^2-R_0^2)^2} \right] \textbf{I}_2.
\end{equation}
From (\ref{RSS_FIM_Evaluate}) and (\ref{RSS_FIM_Covariance}) we obtain that
\begin{equation}\label{RSS_FIM_Moment}
  \E \left[ \left\| \frac{1}{N} \textbf{F}_{\widehat{\boldsymbol{\phi}}} - \frac{1}{N} \E \left[  \textbf{F}_{\widehat{\boldsymbol{\phi}}} \right] \right\|^2_F \right] = \frac{2 \epsilon^2 \gamma^4 }{\sigma_s^4 N} \left[ \frac{1}{2 R^2 R_0^2} - \frac{\log^2(R/R_0)}{(R^2-R_0^2)^2} \right].
\end{equation}
The application of Theorem \ref{Chebyshev} on $\left\| \frac{1}{N} \textbf{F}_{\widehat{\boldsymbol{\phi}}} - \frac{1}{N} \E \left[  \textbf{F}_{\widehat{\boldsymbol{\phi}}} \right] \right\|_F$ with result of (\ref{RSS_FIM_Moment}) proves Theorem \ref{RSS_FIM_Theorem}.

To prove Theorem \ref{Joint_FIM_Theorem}, we observe from the Chebyshev's inequality that we need to compute $\E \left[ \left\| \textbf{X} - \E \left[ \textbf{X} \right] \right\|^2_F \right]$ for $\textbf{X} = \frac{1}{N} \textbf{F}$, which requires evaluation of the following expression
\begin{equation}\label{Joint_FIM_Evaluate}
    \E \left\{ \text{Tr} \left[ \left( \frac{1}{N} \textbf{F} - \frac{1}{N} \E \left[ \textbf{F} \right] \right)^T \left( \frac{1}{N} \textbf{F} - \frac{1}{N} \E \left[ \textbf{F} \right] \right) \right] \right\} = \frac{1}{N^2} \text{Tr} \left\{ \E \left[ \left( \textbf{F} - \E \left[ \textbf{F} \right] \right)^T \left( \textbf{F} - \E \left[ \textbf{F} \right] \right) \right] \right\}.
\end{equation}
The covariance of $\textbf{F} = \textbf{F}_{\widehat{\boldsymbol{\phi}}} + \textbf{F}_{\widehat{\boldsymbol{\theta}} | \widehat{\boldsymbol{\phi}}}$ appeared on the right-hand-side of (\ref{Joint_FIM_Evaluate}) can be derived as
\begin{equation}\label{Joint_FIM_Covariance}
    \E \left[ \left( \textbf{F} - \E \left[ \textbf{F} \right] \right)^T \left( \textbf{F} - \E \left[ \textbf{F} \right] \right) \right]
= N \left\{ \frac{\epsilon^2 \gamma^4}{2 \sigma_s^4 R^2 R_0^2}  + \frac{1}{2} \E \left[ f_n ^2\right] - \frac{1}{4} \left[ \E \left[ f_n \right] + \frac{2 \epsilon \gamma^2 \log(R/R_0)}{\sigma_s^2 (R^2 - R_0^2)}\right]^2  \right\} \textbf{I}_2,
\end{equation}
where $f_n$, $\E \left[ f_n \right]$ and $\E \left[ f_n^2 \right]$ are given by (\ref{Theorem2_Eqn2}). We skip the derivations to obtain (\ref{Joint_FIM_Covariance}) which include tedious manipulations of (\ref{RSS_FIM_Rewrite}) and (\ref{DoA_FIM_Rewrite}). From (\ref{Joint_FIM_Evaluate}) and (\ref{Joint_FIM_Covariance}) we obtain that
\begin{equation}\label{Joint_FIM_Moment}
  \E \left[ \left\| \frac{1}{N} \textbf{F} - \frac{1}{N} \E \left[  \textbf{F} \right] \right\|^2_F \right] = \frac{1}{N} \left\{ \frac{\epsilon^2 \gamma^4}{\sigma_s^4 R^2 R_0^2}  + \E \left[ f_n ^2\right] - \frac{1}{2} \left[ \E \left[ f_n \right] + \frac{2 \epsilon \gamma^2 \log(R/R_0)}{\sigma_s^2 (R^2 - R_0^2)}\right]^2  \right\} .
\end{equation}
The application of Theorem \ref{Chebyshev} on $\left\| \frac{1}{N} \textbf{F} - \frac{1}{N} \E \left[  \textbf{F} \right] \right\|_F$ with result of (\ref{Joint_FIM_Moment}) proves Theorem \ref{Joint_FIM_Theorem}.


\bibliographystyle{IEEEtran}
\bibliography{IEEEabrv,reference}

\end{document}